\newtheorem{definition}{Definition}[section]
\newtheorem{lemma}{Lemma}[section]
\newtheorem{theorem}{Theorem}[section]
\newtheorem{corollary}{Corollary}[section]
\newtheorem{construction}{Construction}[section]
\newtheorem{rules}{Rule}
\newcommand{\mL}{\mathcal{L}}
\newcommand{\mR}{\mathcal{R}}
\newcommand{\mbN}{\mathbb{N}}
\newcommand{\mbZ}{\mathbb{Z}}
\DeclareMathOperator{\area}{area}
\newcommand{\ms}{\mathsmaller}
\title{A framework for generalizing toric inequalities for holographic entanglement entropy}
\author[a,b]{Ning Bao}
\author[a]{Keiichiro Furuya}
\author[a,c]{Joydeep Naskar}
\affiliation[a]{Department of Physics, Northeastern University, Boston, MA, 02115, USA}
\affiliation[b]{Computational Science Initiative, Brookhaven National Laboratory, Upton, NY 11973 USA}
\affiliation[c]{The NSF AI Institute for Artificial Intelligence and Fundamental Interactions, Cambridge, MA, U.S.A.}
\emailAdd{ningbao75@gmail.com}
\emailAdd{k.furuya@northeastern.edu}
\emailAdd{naskar.j@northeastern.edu}
\abstract{


We conjecture a multi-parameter generalization of the toric inequalities of \cite{Czech:2023xed}. We then extend their proof methods for the generalized toric inequalities in two ways. The first extension constructs the graph corresponding to the toric inequalities and the generalized toric conjectures by tiling the Euclidean space. An entanglement wedge nesting relation then determines the geometric structure of the tiles. In the second extension, we exploit the cyclic nature of the inequalities and conjectures to construct cycle graphs. Then, the graph can be obtained using graph Cartesian products of cycle graphs. In addition, we define a set of knots on the graph by following \cite{Czech:2023xed}. These graphs with knots then imply the validity of their associated inequality. We study the case where the graph can be decomposed into disjoint unions of torii. Under the specific case, we explore and prove the conjectures for some ranges of parameters. We also discuss ways to explore the conjectured inequalities whose corresponding geometries are $d$-dimensional torii $(d>2)$.


}
\gdef\@fpheader{}
\begin{document}

\maketitle

\section{Introduction}
The $AdS/CFT$ correspondence\cite{Maldacena1997} gives a dictionary between a theory of quantum gravity in the bulk in an asymptotically $AdS$-spacetime and a conformal field theory living on its boundary. However, not all quantum states on the boundary have a semi-classical dual geometry in the bulk. One of the foundational questions to address is then, which boundary states are dual to semi-classical bulk geometries?
Holographic entropy inequalities (HEIs) provide a partial answer to this question by imposing non-trivial constraints on the quantum states that are dual to semi-classical bulk theories.
For example, one of the simplest holographic entropy inequalities (which is not obeyed by all quantum states, but only holographic ones) is an inequality involving three regions, known as the monogamy of mutual information (MMI)\cite{MMI}:
\begin{equation}\label{eq:MMI}
    S(ab)+S(ac)+S(bc)\geq S(a)+S(b)+S(c)+S(abc).
\end{equation}
Such inequalities follow from the Ryu-Takayanagi(RT) formula\footnote{It is believed that all HEIs derived by the RT formula also hold for their covariant generalization to HRT formula\cite{mukund2007_HRT}. For example, see \cite{Czech:2019lps, white2024_hrt-HEIs} for the case of $2+1$ dimensions.}\cite{Ryu2006} that gives a bulk geometric interpretation of the boundary entanglement entropy or \textit{holographic entanglement entropy}(HEE). The entropy $S(\omega)$ of a boundary region $\omega$ is given by 
\begin{equation}\label{eq:RTformula}
    S(\omega)=\frac{\area \mathcal{X}_\omega}{4G_N},
\end{equation}
where $\mathcal{X}_\omega$ is the Ryu-Takayanagi surface for $\omega$ and $G_N$ is Newton's constant. 

The collection of all such maximally tight constraint HEIs form a rational polyhedral cone known as the holographic entropy cone(HEC), which was first studied for five regions in \cite{Bao:2015bfa}. Each of these HEIs are facets of the HEC and the full characterization of the HEC for five regions was completed in \cite{Sergio2019:5regions}.

Recently, a systematic search for candidate HEIs exploiting their structure in the $I$-basis\cite{He:2019repackaged} written as tripartite mutual information and tripartite conditional mutual information was pursued in \cite{Hernandez-Cuenca:2023iqh} leading to the discovery of 1877 previously unknown 6-party HEIs that are also facets of the HEC for six regions. A complete characterization of the HEC for six(and higher)-regions remains an open problem. A parallel development was the discovery of two infinite families\footnote{A single instance of which was first published in \cite{Czech:22seven-party}.} of HEIs\cite{Czech:2023xed} motivated from the holographic cone of average entropies (HCAE)\cite{Czech:2021average_cone,shec2021}. These inequalities are also facets of their respective HECs\cite{Czech:204facet_proof}. The toric inequalities are one such family (which we review in Section \ref{sec:review-toric}).

The standard method of proving a candidate HEI is the proof by contraction method\footnote{Some other methods for proving HEIs include the use of bit-threads\cite{Freedman_2016,headrick2022covariant, cui2019} and a more recent geometric proof for the MMI inequality \cite{Bousso_2024}.}, where one constructs a contraction map between binary hypercubes satisfying the occurrence data as initial constraints and the contraction condition. Traditionally, the computationally expensive greedy algorithm was used to generate the contraction maps. Recently, a more computationally efficient algorithm to construct contraction maps was given in \cite{Bao:2024contraction_map}. For special families, one may analytically design a deterministic strategy to construct the contraction map, but the applicability of these techniques is largely restricted to those special families. We will discuss such constructive proof methods based on graphical representations of HEIs and use them to prove some exemplary HEIs from our class of conjectures.

In this paper, we study the generalization of toric inequalities and report the existence of a family of HEIs uplifted from a subclass of toric inequalities, along with their detailed proof. The organization of this paper is as follows: In section \ref{sec:definitions}, we set up our notation for the paper. In section \ref{sec:review-toric}, we review the toric inequalities\cite{Czech:2023xed} and describe their graphical representations. In section \ref{sec:generalized-toric} we introduce a generalization of toric inequalities that conjectures new candidate inequalities and discuss the proof methods. In section \ref{sec:examples}, we prove some examples of true inequalities from those generalized toric conjectures and characterize them. Lastly, we discuss possible extensions to our work in section \ref{sec:discussions}.

\section{Definitions and Notations}\label{sec:definitions}

\begin{table}[t]
    \centering
    \begin{tabular}{|c||c|c|}
    \hline
     & \cite{Czech:2023xed} & This paper\\
    \hline
    \hline
       Boundary disjoint subregions  & $A_i,B_j$ & $a_i,b_j$ \\
    \hline
       Types of regions   & $A$-, $B$-type region& $(+)$-, $(-)$-region \\
    \hline
       Parameter of $(+)$-region  & $m$ & $\alpha$, $\alpha_s$ \\
    \hline
       Parameter of $(-)$-region  & $n$ & $\beta$, $\beta_t$ \\
    \hline
       Bitstrings  & $x,y$ & $x,y$ or $X,Y$ \\
    \hline
       Contraction maps  & $f$ & $f$ or $F$ \\
    \hline
       Graph in the proof by a contraction map  & N/A & $T_\mR$ \\
    \hline
       Vertices, edges, faces of $T_\mR$& $X,Y$ for all of them & $R_v$, $E_w$, $L_u$\\
    \hline
       HEE of a set $A=\{a_i\}_{i=1}^{\alpha}$& $S_{a_1^{(\alpha)}}$  & $S_A$\\
    \hline
    \end{tabular}
    \caption{Notation transitions}
    \label{tab:notations}
\end{table}

We describe the notations used in the paper. Our notations are inspired by \cite{Czech:2023xed} with some changes. We summarize the notation transitions in table \ref{tab:notations} and graph theoretic notations in table \ref{tab:graph-notations}.

This paper deals with several subsets of disjoint subregions. For example, for $N+1$ disjoint \emph{monochromatic}\footnote{Consider a boundary region divided into $N+1$ subregions and each subregion is identified with a unique character. A \emph{monochromatic} subregion is labelled by a single character, whereas a \emph{polychromatic} subregion is, in general, labelled by multiple characters. 
}subregions, we consider them as a union of two subsets of disjoint monochromatic subregions denoted as
\begin{equation}
    A=\{a_i\}_{i=1}^{\alpha}, \;B=\{b_j\}_{j=1}^{\beta}
\end{equation}
where $\alpha+\beta=N+1$, and $\alpha$ and $\beta$ are odd numbers. Here, $A$ and $B$ correspond to two disjoint sets of monochromatic subregions respectively. More generally, we can have $n_\alpha$ and $n_\beta$ number of disjoint sets of monochromatic subregions, i.e.,
\begin{equation}
    A_s = \{a_{(i_s,s)}\}_{i_s=1}^{\alpha_s}, \;B_t = \{b_{(j_t,t)}\}_{j_t=1}^{\beta_t}
\end{equation}
where the indices run from $s=1,\cdots, n_\alpha$ and $t=1,\cdots,n_\beta$. $n_\alpha$ and $n_\beta$ are numbers of disjoint subsets of regions, respectively. For each $s$ and $t$, we have $i_s=1,\cdots, \alpha_s$ and $j_t=1,\cdots, \beta_t$ denoting the monochromatic regions of the $s$-th and $t$-th subsets respectively. We denote $a_{(i_s,s)}$ and $b_{(j_t,t)}$ as $a_{i_s}$ and $b_{j_t}$ whenever there is no ambiguity.
Here, the total number of regions are $\sum_{s=1}^{n_\alpha} \alpha_s + \sum_{t=1}^{n_\beta} \beta_t =N+1$. For example, when $n_\alpha=2$ and $n_\beta=1$, we have
\begin{equation}
\begin{split}
    A_1=\{a_{(1,1)},a_{(2,1)},&\cdots,a_{(\alpha_1,1)}\}, \;A_2=\{a_{(1,2)},a_{(2,2)},\cdots,a_{(\alpha_2,2)}\},\\
    &B_1=\{b_{(1,1)},b_{(2,1)},\cdots,b_{(\beta_1,1)}\}.\\
\end{split}
\end{equation}

In addition, we require the indices to satisfy the $\text{mod }\alpha_s$ and $\text{mod }\beta_t$ condition, i.e., 
\begin{equation}\label{eq:mod}
    i_s = i_s+\alpha_s  \text{ mod }\alpha_s, \;j_t = j_t+\beta_t \text{ mod }\beta_t, \; \forall i_s,j_t,s,t.
\end{equation}
We equivalently write the above as
\begin{equation}
    a_{(i_s,s)} \equiv a_{(i_s+\alpha_s,s)},\; b_{(j_t,t)} \equiv b_{(j_t+\beta_t,t)},\;\forall i_s,j_t,s,t.
\end{equation}
In general, we write a set of arbitrary \emph{polychromatic} boundary subregions\footnote{$\omega_k$ do not have to be disjoint to each other.} $\omega_k$ as $\Omega=\{\omega_k\}_{k=1}^{|\Omega|}$  where $|\Omega|$ is the cardinality of the set.

To simplify the notations, for $A$ and $B$, \cite{Czech:2023xed} introduced 
\begin{equation}\label{eq:a_i^k}
    a^{(k)}_i = a_i \cdots a_{i+k-1},\;b^{(k)}_j = b_j \cdots b_{j+k-1},
\end{equation}
and
\begin{equation}\label{eq:a_i^+}
    a^{\pm}_i := a_i^{(\frac{\alpha\pm 1}{2})},\;b^{\pm}_j := b_j^{(\frac{\beta\pm 1}{2})}.
\end{equation}
We define $(+)$-type regions by
\begin{equation}\label{eq:A+B+}
    A^+:=\{a^+_i\}_{i=1}^{\alpha}, \quad B^+:=\{b^+_j\}_{j=1}^{\beta},
\end{equation} 
and $(-)$-type regions by 
\begin{equation}
    A^-:=\{a^-_i\}_{i=1}^{\alpha}, \quad B^-:=\{b^-_j\}_{j=1}^{\beta}.
\end{equation}
More generally, $(+)$-type and $(-)$-type regions for $A_s$ and $B_t$ are denoted as $A^{\pm}_s:= \{a^{\pm}_{i_s}\}_{i_s=1}^{\alpha_s}$ and $B^{\pm}_t:=\{b^\pm_{j_t}\}_{j_t=1}^{\beta_t}$ respectively.

For any boundary subregion $\omega$, we may compute the entanglement entropy $S_\omega$. For example, $A=\{a_i\}_{i=1}^\alpha$, we denote $S_{a_{i}}$ to be the holographic entanglement entropy of a single disjoint subregion $a_{i}$. The entropy of a composite subregion $\{a_{i},a_{i+1}, a_{i+2}\}$ is written as $S_{a_{i}a_{i+1}a_{i+2}}$. In particular, we write the entropy of the set $A$ of regions as
\begin{equation}\label{eq:A}
    S_A := S_{a_1^{(\alpha)}}.
\end{equation}

We write a $N$-party entropy inequality for $N+1$ disjoint subregions (including the purifier)\footnote{``$N$-party entropy inequality'' implies that no term in the inequality contains a purifier explicitly.} as
\begin{equation} \label{eq:genentineq}
    \sum_{u=1}^l c_u S_{L_u} \geq \sum_{v=1}^r d_v S_{R_v}
\end{equation}
where $c_u, d_v >0$ are positive coefficients.  $L_u, R_v$ are the corresponding composite subregions of $u$-th term on the LHS and $v$-th term on the RHS, respectively. $l$ and $r$ are the total number of terms on the LHS and RHS. 

We denote the set of all terms (to be precise, the subregions associated with them) $L_u$  on the LHS of the inequality as
\begin{equation}
    \mL:=\{L_u\}_{u=1}^{l}.
\end{equation}
Similarly, the set of associated subregions for all terms $R_v$ on the RHS of the inequality is denoted as
\begin{equation}
   \mR:=\{R_v\}_{v=1}^{r}. 
\end{equation}


\begin{table}[t]
    \centering
    \begin{tabular}{| m{8cm} || m{3cm} | m{2.5cm} |}
    \hline
       $\alpha_s$-,$\beta_t$-cycle graph for $A_s^\pm$ and $B_t^\pm$ & $C^\pm_{\alpha_s}$, $C^\pm_{\beta_t}$  & Definition \ref{def:cyclegraphs}, (\ref{def:cyclegraphs-st})\\
    \hline
       Graph Cartesian product between $G$ and $H$  & $G \Box H$ & Definition \ref{def:complementgraphproduct} \\
    \hline
       Left and right graph  & $G_\mL$, $G_\mR$ & (\ref{eq:leftrightgraph})\\
    \hline
       Left and right toroidal graph& $T_\mL$, $T_\mR$ & Definition \ref{lem:embed-torus}  \\
    \hline
       $n_\alpha$-, $n_\beta$-toroidal graph& $C^\pm_{\alpha_1} \Box \cdots \Box C^\pm_{\alpha_{n_\alpha}}$, $C^\pm_{\beta_1} \Box \cdots \Box C^\pm_{\beta_{n_\beta}}$ & (\ref{eq:higher-dim-toroidal-graph})\\
    \hline
       Subgraph of $n_\alpha$-, $n_\beta$-toroidal graph& $G^\pm_{\{\alpha\}},G^\pm_{\{\beta\}}$ & (\ref{eq:multicyclegraphsalpha}),(\ref{eq:multicyclegraphsbeta}) \\
    \hline
       Cycle graphs in the decomposition of $G^\pm_{\{\alpha\}},G^\pm_{\{\beta\}}$ & $C^\pm_{\kappa_{\{\alpha\}}}, C^\pm_{\kappa_{\{\beta\}}}$ & Lemma \ref{lem:decomposition-with-cyclegraphs}\\
    \hline
      Toroidal graphs in the decomposition of $T_\mL$ and $T_\mR$ & $T_{\mL_\tau},T_{\mR_\tau}$,
      
      $\tau:= (\kappa_{\{\alpha\}},\kappa_{\{\beta\}})$ & (\ref{eq:leftrighttorus-decomposition})\\
    \hline
    \end{tabular}
    \caption{Graph notations used in the paper. The first and second columns present the graphs and their notations respectively. The third column points to their first introduction in the paper.}
    \label{tab:graph-notations}
\end{table}

\section{Review of the toric inequalities and their proof by a geometric contraction map}\label{sec:review-toric}

In this section, we review the infinite family of toric inequalities, characterized by two odd numbers $(\alpha,\beta)$ first found in \cite{Czech:2023xed}, followed by the analytical proof of toric inequalities being HEIs. Hence, we only deal with graphs embeddable into a two-dimensional Euclidean space.  

\subsection{Toric inequalities}

Consider two sets of disjoint regions $ A=\{a_i\}_{i=1}^{\alpha}$ and $B=\{b_j\}_{j=1}^{\beta}$, where one of them includes the purifier, i.e., the composite holographic quantum state over $(\alpha+\beta)$-regions is a pure state. 
The toric inequalities can be expressed as
\begin{equation} \label{eq:toric-definition}
    \sum_{i=1}^{\alpha}\sum_{j=1}^{\beta} S_{a_i^+b_j^-} \geq \sum_{i=1}^{\alpha}\sum_{j=1}^{\beta} S_{a_i^-b_j^-}+ S_{A},
\end{equation}
written in the notation defined in (\ref{eq:a_i^+}) and (\ref{eq:A}). Inequality (\ref{eq:toric-definition}) has a dihedral symmetry $D_{\alpha} \times D_{\beta}$ over the regions $A$ and $B$ respectively.
Replacing the terms explicitly containing the purifier $O$ with their complements yields a $(\alpha+\beta-1)$-party HEI. It has been further proved in \cite{Czech:204facet_proof} that the toric inequalities are the facets of HEC. As noted in \cite{Czech:2023xed}, this family subsumes the family of dihedral inequalities found in \cite{Bao:2015bfa}
\begin{equation}
    \label{eq:dihedral-ineqs}
    \sum_{i=1}^{\alpha} S_{a_i^+} \geq \sum_{i=1}^{\alpha} S_{a_i^-}+ S_{A}.
\end{equation}
It is therefore a natural curiosity to uplift\footnote{There is more than one way to uplift the toric inequalities. In section \ref{sec:generalized-toric}, we will consider the most straightforward generalization. In section \ref{sec:balance-superbalance}, we give examples of an extended class of generalizations that obey balance and superbalance conditions.} the toric inequality (\ref{eq:toric-definition}) to more regions to conjecture new holographic inequalities and test their validity.

\begin{figure}
    \centering
    \includegraphics[width=0.8\linewidth]{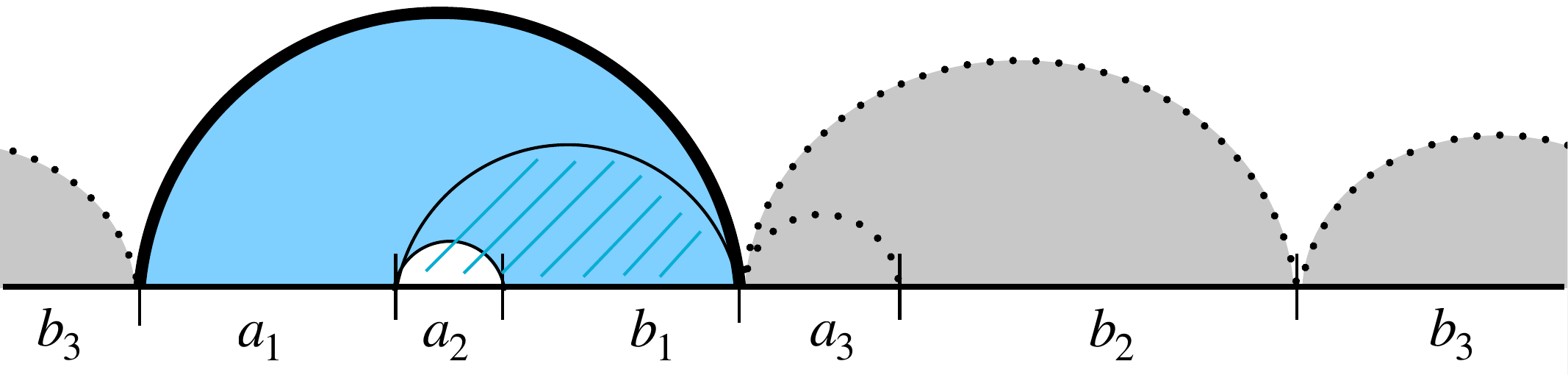}
    \caption{\small{The example of the inclusions and exclusions of entanglement wedges in a static time slice of AdS$_3$/CFT$_2$. The boundary region is partitioned into $a_1a_2a_3b_1b_2b_3$. The geometry with the partition is the example of $(3;3)$-inequalities. The boundary of entanglement wedge $EW(a_1a_2b_1)$ is the union of the thick solid black line and the boundary subregion $a_1a_2b_1$. The blue region is $EW(a_1b_1)$. $EW(a_2b_1)$ is covered with the blue lines. Any other entanglement wedges outside of $EW(a_1a_2b_1)$ are highlighted with gray, whose RT surfaces are the dotted lines.}}
     \label{fig:33etwnplanartorus}
\end{figure}


\subsection{Graphical representation of $(\alpha;\beta)$-inequalities}
The toric inequalities can be represented graphically for finite $\alpha$ and $\beta$\footnote{One may study the continuum limit setting $(\alpha,\beta)\to (\infty,\infty)$\cite{Czech:2023xed}.}. We provide two ways of constructing the graph. The first approach is similar to the one used in \cite{Czech:2023xed}, in which we construct atomic cells tiling the two-dimensional Euclidean space. In the second approach, we construct a toroidal graph\footnote{A graph that can be embedded on a torus.} exploiting the cyclic nature of the inequalities. 

The graph is constructed to satisfy an entanglement wedge nesting(EWN) relation. 
The entanglement wedge $EW(\omega)$ of a boundary subregion $\omega$ is the bulk subregion enclosed by $\omega$ on the boundary and its RT surface in the bulk homologous to $\omega$. For two boundary subregions $\omega$ and $\omega'$ such that $\omega\subseteq \omega'$, EWN states that $EW(\omega)\subseteq EW(\omega')$\cite{akers2016geometricconstraintssubregionduality,Akers2020ewn}. We write 
\begin{equation}
    \omega \underset{\ms{EWN}}{\subseteq} \omega'
\end{equation}
for any pair of boundary subregions satisfying the EWN relation.

For a set $\Omega=\{\omega_k\}_{k=1}^{|\Omega|}$ of arbitrary boundary subregions, we define inclusion subset $Inc(\omega_k)\subseteq \Omega$ and exclusion subset $Exc(\omega_k)\subseteq \Omega$ of a given arbitrary subregion $\omega_k$ using the EWN relation.
\begin{definition}
    Consider a set $\Omega=\{\omega_k\}_{k=1}^{|\Omega|}$ of arbitrary boundary subregions. For $\omega_k\in \Omega$, 
    \begin{equation}
    \begin{split}
        Inc(\omega_k) &:= \{ \omega_l \in \Omega|\; \omega_l \underset{\ms{EWN}}{\subseteq} \omega_k\}\\
        Exc(\omega_k) &:= \{ \omega_l \in \Omega|\; \omega_l \underset{\ms{EWN}}{\subseteq} \bar{\omega}_k \}\\
    \end{split}
    \end{equation}
    where $\bar{\omega}_k$ denotes the complement of the boundary subregion $\omega_k$. $Inc(\omega_k)$ is the subset of boundary subregions whose entanglement wedges are contained in $EW(\omega_k)$. $Exc(\omega_k)$ is the subset of boundary subregions whose entanglement wedges are contained in $EW(\bar{\omega}_k)$. For later purposes, we define
    \begin{equation}
        \overline{Inc}(\omega_k) := \{ \omega_l \in \Omega|\; \omega_l \underset{\ms{EWN}}{\supseteq} \omega_k\} 
    \end{equation}
    It is the subset of boundary subregions whose entanglement wedges contain $EW(\omega_k)$. 
\end{definition}

\begin{figure}
    \centering
    \includegraphics[width=0.8\linewidth]{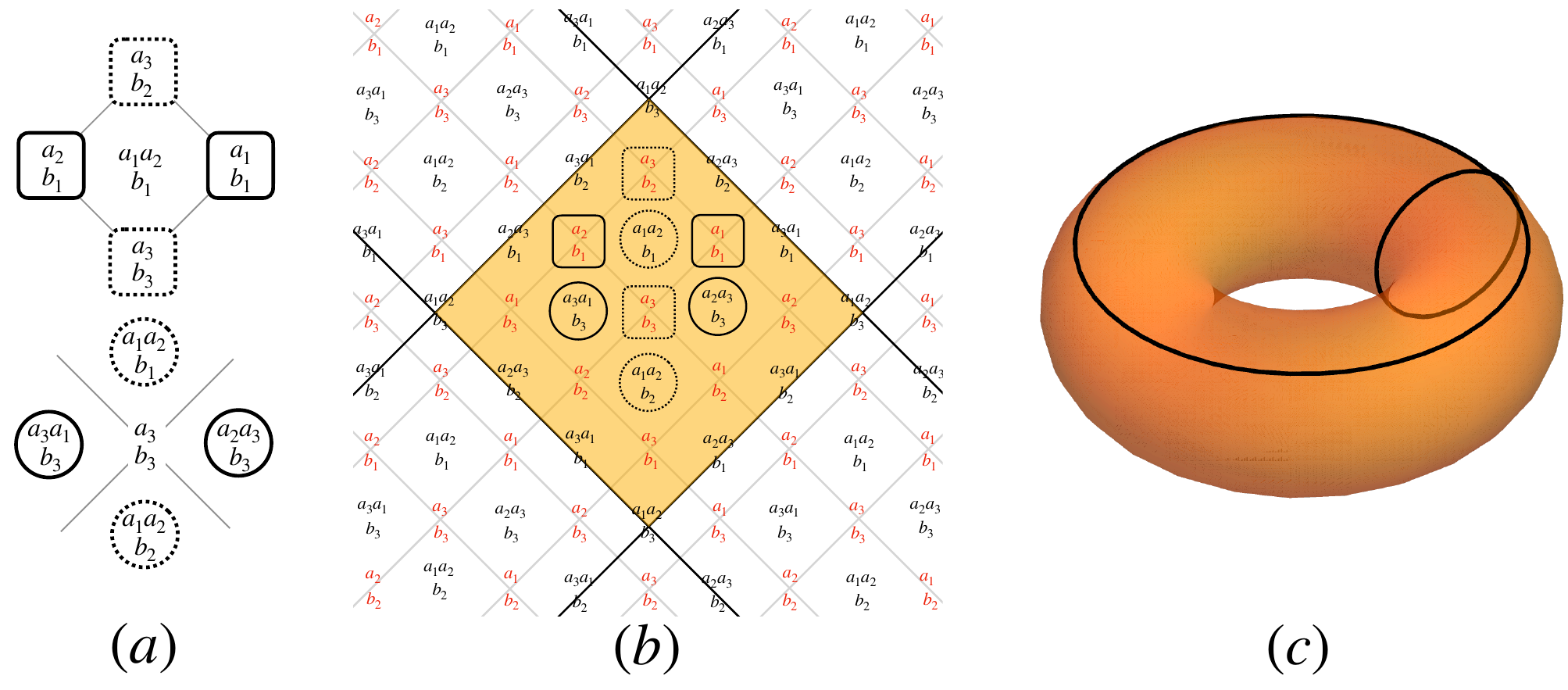}
    \caption{\small{(a) A choice of atomic cell. The solid rounded squares denote the elements of $Inc(a_1a_2b_1)=\{a_1b_1,a_2b_1\}$, and the dotted rounded squares denote the elements of $Exc(a_1a_2b_1)=\{a_3b_2,a_3b_3\}$. The solid circles are the elements of $\overline{Inc}(a_3b_3)=\{a_2a_3b_3,a_3a_1b_3\}$, and the dotted circles are the elements of $ Exc(a_3b_3) =\{a_1a_2b_1,a_1a_2b_2\}$. 
    (b)Graphical representation of $(3;3)$-inequalities. The yellow ochre region is the fundamental domain. Identifying the parallel black lines gives a torus. (c) It is the torus in which the graph $(b)$ can be embedded. The black lines correspond to those in (b) running perpendicular to each other.}}
     \label{fig:33atomicnonplanartorus}
\end{figure}

To find the atomic cells, we study these subsets of the regions associated with the terms on the LHS and RHS of the $(\alpha;\beta)$-inequalities. For $(\alpha;\beta)$-inequalities in (\ref{eq:toric-definition}), consider $\mL =\{L_u\}_{u=1}^{\alpha \beta} $ and $\mR= \{R_v\}_{v=1}^{\alpha \beta+1}$ where $l=\alpha \beta$, $r=\alpha \beta+1$ and $\alpha+\beta =N+1$. The inclusion subset and the exclusion subset of $L_u \in \mL$ are 
\begin{equation}
\begin{split}
    Inc(L_u) &= \{ R_v \in \mR| R_v \underset{\ms{EWN}}{\subseteq} L_u \},\\
    Exc(L_u)&= \{ R_v \in \mR| R_v \underset{\ms{EWN}}{\subseteq} \bar{L}_u\}.
\end{split}
\end{equation}
For $R_v \in \mR$, we have
\begin{equation}
\begin{split}
    \overline{Inc}(R_v) &= \{ L_u \in \mL| L_u \underset{\ms{EWN}}{\supseteq} R_v \} ,\\
    Exc(R_v) &= \{ L_u \in \mL| L_u \underset{\ms{EWN}}{\subseteq} \bar{R}_v\}.
\end{split}
\end{equation}

For example, in $(3;3)$-inequality\footnote{For the figure of $(5;3)$-inequality, see \cite{Czech:2023xed}. } as in figure \ref{fig:33etwnplanartorus} and \ref{fig:33atomicnonplanartorus}, we get 
\begin{equation}\label{eq:atomiccells}
\begin{split}
    Inc(a_i a_{i+1} b_j) &= \{a_i b_j, a_{i+1} b_j \},\\
    Exc(a_i a_{i+1} b_j) &= \{ a_{i+2}b_{j+1},a_{i+2}b_{j+2}\}\\
    \overline{Inc}(a_ib_j) &= \{a_{i-1}a_{i}b_j,a_ia_{i+1}b_j \},\\
    Exc(a_ib_j) &= \{a_{i+1}a_{i+2}b_{j+1},a_{i+1}a_{i+2}b_{j+1} \}\\
\end{split}
\end{equation}
for $\forall i,j=1,2,3$.
The cardinalities of the sets are given 
by
\begin{equation}
     |Inc(L_u)| = 2-\delta_{\alpha,1}, |Exc(L_u)| = 2-\delta_{\beta,1},\; \forall L_u \in \mL
\end{equation}
and 
\begin{equation}
     |\overline{Inc}(R_v)| = 2-\delta_{\alpha,1}, |Exc(R_v)| = 2-\delta_{\beta,1},\; \forall R_v \in \mR
\end{equation}
where $\delta_{\alpha,1}$ and $\delta_{\beta,1}$ are the Kronecker deltas.

To construct an atomic cell, we choose to place $L_u$ on the face of a rhombus. We diagonally put the elements of $Inc(L_u)$ and $Exc(L_u)$ on its \emph{opposite} vertices. One can tile the two-dimensional space with these rhombi. The assignment of $\overline{Inc}(R_v)$ and $Exc(R_v)$ naturally follows from the assignment of $Inc(L_u)$ and $Exc(L_u)$ by the EWN relation, i.e., $R_v$ is placed on a vertex of the rhombus and the elements of $\overline{Inc}(R_v)$ and $ Exc(R_v)$ are on the faces of the rhombi adjacent to the vertex $R_v$, see figure \ref{fig:33atomicnonplanartorus} for instance.

The choice of geometry of an atomic cell is based on the fact that each LHS term $L_u$ is related to four terms on the RHS by EWN and vice-versa, i.e., $|Inc(L_u)|+|Exc(L_u)|=4$ and $|\overline{Inc}(R_v)|+| Exc(R_v)|=4$ (except the last term on the RHS, which is not a part of the tiling), for $\alpha>1$ and $\beta>1$\footnote{When either $\alpha=1$ or $\beta=1$, the single element appears repeatedly on an atomic cell. For example, $Exc(L_u)$ of $(3,1)$-inequality, or monogamy of mutual information, has a single element. We diagonally assign the single element repeatedly.}. 

\begin{figure}
    \centering
    \includegraphics[width=0.6\linewidth]{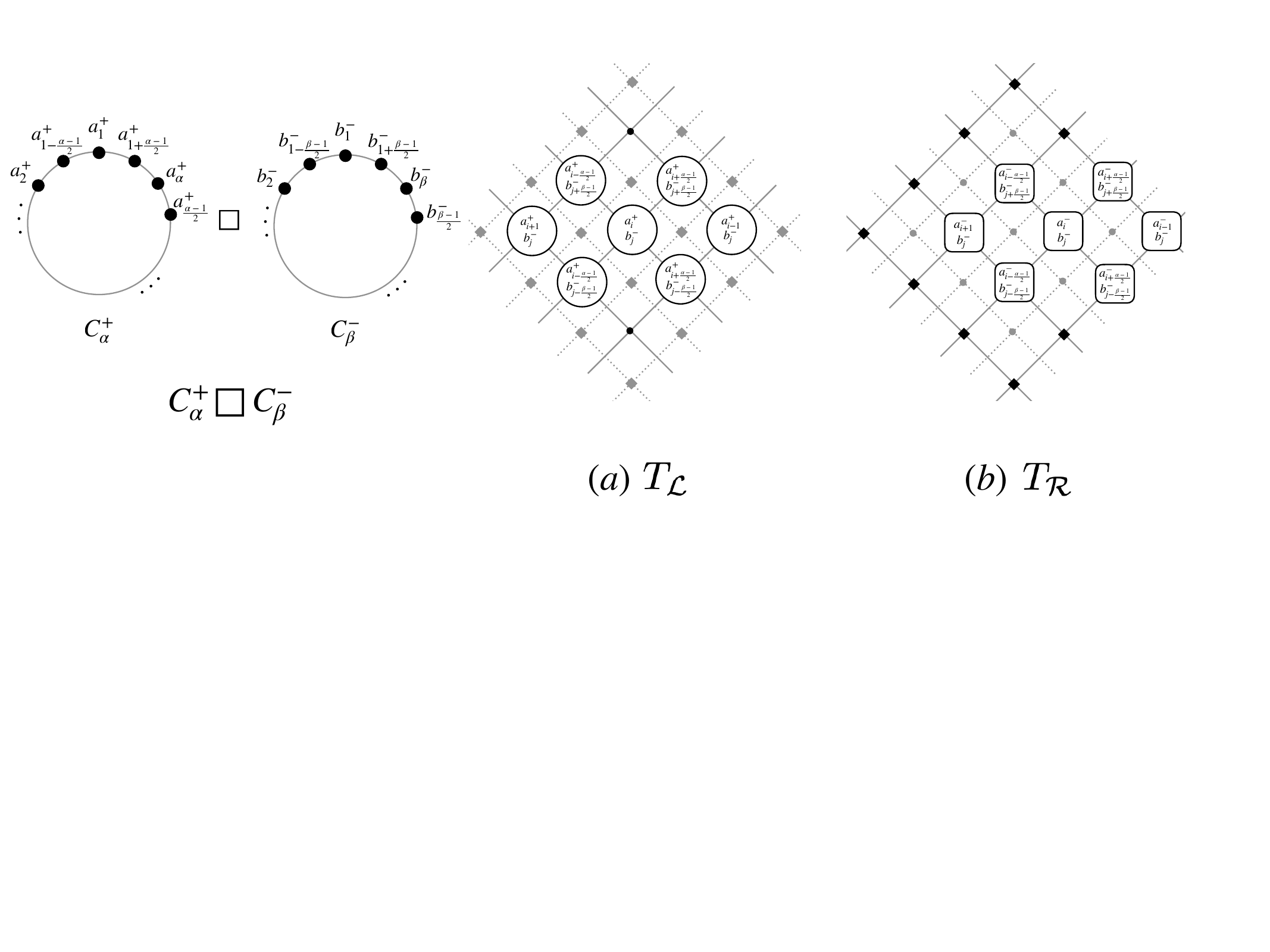}
    \caption{\small{Graph Cartesian product between $\alpha$-cycle graph $C^+_{\alpha}$ and $\beta$-cycle graph $C^-_{\beta}$. The black circle dots and the gray solid lines represent the vertices and the edges of the cycle graphs, respectively. }}
    \label{fig:cycleproducts1}
\end{figure}

\begin{figure}
    \centering
    \includegraphics[width=0.8\linewidth]{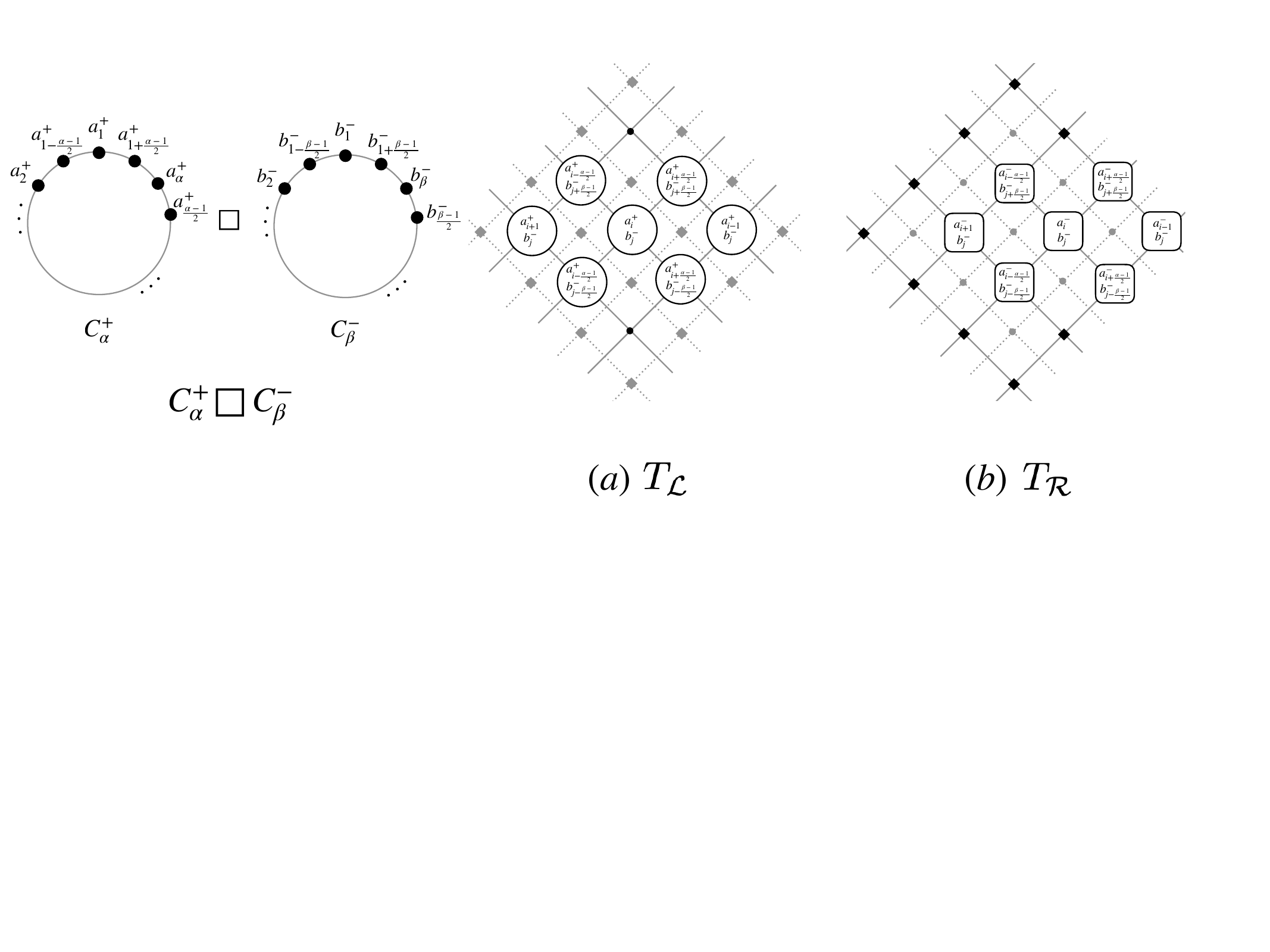}
    \caption{\small{(a) The left toroidal graph $T_\mL=(V_\mL,E_\mL,F_\mL)$. The black circle dots are the vertices $V_\mL$. Some vertices are labeled by the subregions. The gray solid lines are the edges $E_\mL$. The faces $F_\mL$ are enclosed by four dots and four edges. The dual vertices and the dual edges of $T_\mL$ are denoted by the gray square dots and the gray dotted lines. (b) The right toroidal graph $T_\mR = (V_\mR,E_\mR,F_\mR)$ dual to $T_\mL$. The black square dots are the vertices $V_\mR$. Some vertices are labeled by the subregions. The gray solid lines are the edges $E_\mR$. The faces $F_\mR$ are enclosed by four dots and four edges. The dual vertices and the dual edges of $T_\mR$ are denoted by the gray circle dots and the gray dotted lines.}}
    \label{fig:cycleproducts2}
\end{figure}

Below, we summarize the first construction.
\begin{construction}[Geometrization of $(\alpha;\beta)$-inequalities: Tiling method, figure \ref{fig:33atomicnonplanartorus}] \label{thm:tile-torus}\

    Consider $(\alpha;\beta)$-inequality.
    \begin{enumerate}
        \item For $\mL$ and $\mR$, determine the four sets $Inc(L_u),Exc(L_u),\overline{Inc}(R_v), Exc(R_v)$. 
        \item Construct the atomic cells in two-dimensional space by placing the pair of the elements of $Inc(L_u)$ and that of $ Exc(L_u)$ diagonally around $L_u$ for each $L_u\in\mL$.
        \item Tile two-dimensional space by combining the atomic cells by identifying the vertices with the same label.
    \end{enumerate}

\end{construction}

For the second construction, we first define cycle graphs from the sets $A^\pm$ and $B^\pm$, which encode the cyclic symmetries in $(\alpha;\beta)$-inequality to the graph. 
\begin{definition}[$\alpha$- and $\beta$-cycle graphs, figure \ref{fig:cycleproducts1}]\label{def:cyclegraphs}

    For $A^\pm= \{a^\pm_i\}_{i=1}^{\alpha}$ and $B^\pm= \{b^\pm_j\}_{j=1}^{\beta}$, we define odd cycle graphs, $C^\pm_\alpha$, $C^\pm_\beta$
    , of unit distance as
    \begin{equation}
        C^\pm_\alpha := \{V^\pm_\alpha,E^\pm_\alpha\}, \; C^\pm_\beta := \{V^\pm_\beta,E^\pm_\beta\}.
    \end{equation}
    $V^\pm_\alpha := A^\pm$ and $V^\pm_\beta := B^\pm$ are the sets of vertices labeled by the boundary subregions. $E^\pm_\alpha$ and $E^\pm_\beta$ are the sets of edges \footnote{We denote edges between the vertices $a_i$ and $a_j$ by the brackets $\braket{a_i,a_j}$.} defined as
    \begin{equation}
    \begin{split}
        E^\pm_\alpha&:=\{\braket{a^\pm_i,a^\pm_{i'}}| i'-i= \frac{\alpha-1}{2}\text{ or }i'-i=-  \frac{\alpha-1}{2} \}\\
        E^\pm_\beta&:=\{\braket{b^\pm_j,b^\pm_{j'}}| j'-j= \frac{\beta-1}{2}\text{ or }j'-j=-\frac{\beta-1}{2} \}\\
    \end{split}
    \end{equation}
    
\end{definition}
The number of vertices and edges of $C^\pm_\alpha$ and $C^\pm_\beta$, and their graph lengths $|C^\pm_\alpha|$ and $|C^\pm_\beta|$ are $|C^\pm_\alpha|=|V^\pm_\alpha| = |E^\pm_\alpha|=\alpha$ and $|C^\pm_\beta|=|V^\pm_\beta| = |E^\pm_\beta|=\beta$, respectively. To construct the graphs $G_\mL$ and $G_\mR$ that have the information on the LHS and the RHS of the $(\alpha;\beta)$-inequalities, we define graph Cartesian product below.
\begin{definition}[Graph Cartesian product\cite{gross1998graph,Bondy-Murty}]\label{def:complementgraphproduct}


    For two graphs $G=(V_G,E_G)$ and $H=(V_H,E_H)$, we define the graph Cartesian product, i.e.,
     \begin{equation}
        G \Box H := (V_{GH},E_{GH}).
    \end{equation}
    The set of vertices $V_{GH} := V_{G}\times V_H \ni (g,h)$ is a Cartesian product between the set of vertices $V_{G}$ and $V_{H}$. $E_{GH}:= (V_G \times E_H )\cup (E_G \times V_H)$ is the set of edges.
    

\end{definition}
We then construct a \textit{left graph} $G_\mL$ and a \textit{right graph} $G_\mR$ up to graph isomorphisms\footnote{Two simple graphs $G$ and $H$ are graph isomorphic if the adjacency and non-adjacency vertices in $G$ are bijectively mapped to the adjacency and non-adjacency vertices in $H$, respectively.} by
\begin{equation}\label{eq:leftrightgraph}
\begin{split}
    G_\mL &\simeq C^+_\alpha \Box C^-_\beta = (V_\mL,E_\mL) \\
    G_\mR &\simeq C^-_\alpha \Box C^-_\beta =(V_\mR,E_\mR).
\end{split}
\end{equation}
where $|V_\mL|=|V_\mR|=\alpha\beta$ and $|E_\mL|=|E_\mR|=2\alpha\beta$, see figure \ref{fig:cycleproducts1}.
Note that for every $(i,j)$, where $i=1,\cdots, \alpha; j=1,\cdots, \beta$, there exists a corresponding $(a^+_i,b^-_j)\in V_\mL$ in $\mL$. Hence, we identify $V_\mL$ as $\mL$, and will use them interchangeably i.e., $V_\mL\equiv\mL$. Similarly, $V_\mR$ is identified as $\mR$, i.e., $V_\mR\equiv\mR$.




    




$G_\mL$ and $G_\mR$ can be embedded on a torus because one can draw each of them on a torus without any intersecting edges and $4$-cycles form the faces on the torus\cite{west_introduction_2000}. Now we can define \emph{toroidal graphs}.
\begin{definition}[Left toroidal graph and right toroidal graph, figure \ref{fig:cycleproducts2}]\label{lem:embed-torus}

    We denote $T_\mL$ and $T_\mR$ the left toroidal graph and the right toroidal graph obtained by embedding $G_\mL$ and $G_\mR$ on a $2$-torus, i.e.,
    \begin{equation}
        G_\mL \to T_\mL,\; G_\mR \to T_\mR ,
    \end{equation}
    and defined as 
    \begin{equation}
        T_\mL := (V_\mL,E_\mL,F_\mL), \;T_\mR := (V_\mR,E_\mR,F_\mR)
    \end{equation}
    where the sets, $F_\mL, F_\mR$, of labeled faces are faithfully determined by their set of vertices, $V_\mL=\mL, V_\mR=\mR$, based on the EWN relation, i.e.,
    \begin{equation}
    \begin{split}
        F_\mL = \underset{i=1,\cdots,\alpha,j=1,\cdots,\beta}{\bigcup} Inc(a^+_{i}b_j^-) \cap  Inc(a^+_{i+1}b_j^-)
    \end{split}
    \end{equation}
    \begin{equation}
    \begin{split}
        F_\mR = \underset{i=1,\cdots,\alpha,j=1,\cdots,\beta}{\bigcup} \overline{Inc}(a^-_{i}b_j^-) \cap  \overline{Inc}(a^-_{i+1}b_j^-)
    \end{split}
    \end{equation}



    
\end{definition}





    

One can check that $T_\mL$ and $T_\mR$ are topologically $2$-torus by calculating its graph genus $g$ from $|V|-|E|+|F|=2-2g$\cite{west_introduction_2000}, i.e., 
\begin{equation}
    g_\mL= g_\mR = \frac{|V_\mL|-|E_\mL|+|F_\mL|+2}{2} = 1
\end{equation}
where we used that $|V_\mL|=|V_\mR|=\alpha\beta$,  $|E_\mL|=|E_\mR|=2\alpha\beta$, and $|F_\mL|=|F_\mR|=\alpha\beta$.

It is useful to notice that $T_\mR$ is a dual graph of $T_\mL$.
\begin{lemma}

    $T_\mR$ is a dual graph of $T_\mL$ denoted by
    \begin{equation}
        T_\mR = \hat{T}_\mL.
    \end{equation}

\end{lemma}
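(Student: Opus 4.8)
The plan is to prove that $T_\mR = \hat{T}_\mL$ by constructing an explicit bijection between the combinatorial data of the dual graph $\hat T_\mL$ and that of $T_\mR$, and checking it respects incidence. Recall that the dual $\hat T_\mL$ of the embedded graph $T_\mL=(V_\mL,E_\mL,F_\mL)$ has one vertex for each face in $F_\mL$, one edge for each edge in $E_\mL$ (dual edges cross primal edges), and one face for each vertex in $V_\mL$. So the natural strategy is to exhibit three bijections
\begin{equation}\label{eq:dual-bijections}
    \phi_V\colon F_\mL \to V_\mR,\quad \phi_E\colon E_\mL \to E_\mR,\quad \phi_F\colon V_\mL \to F_\mR,
\end{equation}
and show that $\phi_E$ carries the primal-dual crossing incidence of $T_\mL$ to the vertex-edge-face incidence of $T_\mR$.

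First I would set up the vertex bijection $\phi_V$. The faces $F_\mL$ are, by Definition \ref{lem:embed-torus}, labelled by the sets $Inc(a^+_i b^-_j)\cap Inc(a^+_{i+1}b^-_j)$; for the $(3;3)$ example these are exactly the RHS terms $R_v$ of the form $a_i a_{i+1} b_j$, which sit at the $\overline{Inc}$ "circle" vertices in figure \ref{fig:33atomicnonplanartorus}. Since $V_\mR\equiv\mR$, I would match each such face to its defining RHS region, giving $\phi_V$; the count $|F_\mL|=\alpha\beta=|V_\mR|$ confirms this is a bijection. Dually, each vertex $(a^+_i,b^-_j)\in V_\mL\equiv\mL$ corresponds to an LHS term $L_u$, and by the symmetric definition of $F_\mR$ via $\overline{Inc}$, the faces of $T_\mR$ are likewise labelled by the LHS terms; this gives $\phi_F\colon V_\mL\to F_\mR$, again with matching cardinality $\alpha\beta$. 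The content here is that the EWN-based face assignment in Definition \ref{lem:embed-torus} is precisely the "place $R_v$ on a vertex of the rhombus, with $\overline{Inc}(R_v)$ and $Exc(R_v)$ on adjacent faces" rule of Construction \ref{thm:tile-torus}, so the primal faces and dual vertices genuinely carry the same labels.

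Next I would handle the edge bijection. Since $|E_\mL|=|E_\mR|=2\alpha\beta$, the edge counts already match, so it remains to define $\phi_E$ so that an edge $e\in E_\mL$ separating two faces $L_u, L_{u'}$ (equivalently two dual vertices of $\hat T_\mL$) maps to the edge of $T_\mR$ joining $\phi_V(L_u)$ and $\phi_V(L_{u'})$. Concretely, $E_\mL$ comes from the Cartesian-product structure $C^+_\alpha\Box C^-_\beta$, so each edge is either an "$\alpha$-edge" $\braket{a^+_i,a^+_{i'}}$ at fixed $b^-_j$ or a "$\beta$-edge" at fixed $a^+_i$; I would check that each such edge borders exactly two faces of $F_\mL$ and that the dual edge crossing it connects the corresponding two $V_\mR$ vertices in $T_\mR\simeq C^-_\alpha\Box C^-_\beta$. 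The key structural input is that $A^+$ and $A^-$ yield isomorphic cycle graphs $C^+_\alpha\simeq C^-_\alpha$ (both are odd $\alpha$-cycles by Definition \ref{def:cyclegraphs}), which is what lets the shared $C^-_\beta$ factor and the swapped $A$-factor line up under duality.

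The step I expect to be the main obstacle is verifying the incidence-preservation rather than the bare cardinality matches: I must confirm that the cyclic adjacency structure of $T_\mR\simeq C^-_\alpha\Box C^-_\beta$ is exactly the face-adjacency structure of $T_\mL\simeq C^+_\alpha\Box C^-_\beta$, i.e. that two faces of $T_\mL$ share an edge if and only if the corresponding $\mR$-vertices are adjacent in $T_\mR$. This reduces to checking that the EWN relations in \eqref{eq:atomiccells} assemble the rhombi consistently so that walking across an edge of $T_\mL$ corresponds to the shift $i\mapsto i\pm\frac{\alpha-1}{2}$ or $j\mapsto j\pm\frac{\beta-1}{2}$ defining the edges of $C^-_\alpha$ and $C^-_\beta$. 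Because both graphs are embedded on the same $2$-torus (genus $1$, as computed from $|V|-|E|+|F|=2-2g$) and the dual of a graph embedded on a surface is unique up to isomorphism, once the three label-bijections respect incidence the identification $T_\mR=\hat T_\mL$ follows. I would either present the full check for general $(\alpha;\beta)$ using the product structure, or, if that is too unwieldy, verify it cleanly on the $(3;3)$ case of figure \ref{fig:33atomicnonplanartorus} and invoke the cyclic symmetry $D_\alpha\times D_\beta$ together with the uniform $Inc/Exc$ cardinalities to extend it.
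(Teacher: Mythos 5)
Your proposal is correct and takes essentially the same approach as the paper: the paper's proof consists precisely of exhibiting the three one-to-one maps $V_\mL\to F_\mR$, $E_\mL\to E_\mR$, $F_\mL\to V_\mR$ induced by the EWN labeling (noting that the dual vertex of $L_u$ is a face of $T_\mR$ labeled by an element of $Inc(L_u)\cup Exc(L_u)$, and symmetrically for $R_v$), with the incidence structure read off from the figure. You merely spell out the incidence-preservation check that the paper leaves to inspection of the Cartesian-product structure and the figure.
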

The above lemma holds because there exists a one-to-one map $V_\mL\to F_\mR$, $E_\mL\to E_\mR$, and $F_\mL\to V_\mR$, see figure \ref{fig:cycleproducts2}. 
Note that, for any vertex $L_u \in V_\mL$ of $T_\mL$, its dual vertex is a face of $T_\mR$ labeled with an element of $Inc(L_u)\cup Exc(L_u)$. Similarly, for any vertex $R_v \in V_\mR$ of $T_\mR$, its dual vertex is a face of $T_\mL$ labeled with an element of $\overline{Inc}(R_v)\cup Exc(R_v)$. 




    

One should notice that the last term on the RHS of the toric inequality, $S_{A}$, does not belong to any of the above sets. This term is not geometrized explicitly but has a key role when constructing a geometric contraction map. We summarize the second construction below.

\begin{construction}[Geometrization of $(\alpha;\beta)$-inequalities: Graph theoretical method, figure \ref{fig:cycleproducts2}] \label{thm:graph-torus}\

    Consider $(\alpha;\beta)$-inequality.
    \begin{enumerate}
        \item Construct $\alpha$-, and $\beta$-cycle graphs, $C^+_\alpha$ and $C^-_\beta$.
        \item Construct $G_\mL = C^+_\alpha \Box C^-_\beta$. 
        \item Obtain $T_\mL$ by embedding $G_\mL$ on a torus.
        \item $T_\mR = \hat{T}_\mL$ 
    \end{enumerate}

    Note that the graph explicitly used for the proof by a contraction map in \cite{Czech:2023xed} and this paper corresponds to $T_\mR$.

\end{construction}



\subsection{Proof by a geometric contraction map}
In general, for a $N$-party entropy inequality involving $N+1$ disjoint regions (including the purifier), $A=\{a_i\}_{i=1}^{N+1}$,
\begin{equation} \label{eq:genentineq}
    \sum_{u=1}^l c_u S_{L_u} \geq \sum_{v=1}^r d_v S_{R_v},
\end{equation}
we define $l$- and $r$- dimensional bitstrings as $x \in \{0,1\}^l$ and $y\in \{0,1\}^r$. We define a special set of bitstrings, called \textit{occurrence bitstrings}, $x^{a_i} \in \{0,1\}^l$ for every single region $i\in \{1,\cdots,N+1\}$ and accordingly define $y^{a_i}$ in the RHS, as follows:
\begin{equation}
    (x^{a_i})_u =
\begin{cases}
1 & \text{if $a_i \subseteq L_u$ }\\
0 & \text{otherwise}\\
\end{cases}, \;
    (y^{a_j})_v =
\begin{cases}
1 & \text{if $a_j \subseteq R_v$}\\
0 & \text{otherwise}\\
\end{cases}.
\end{equation}
We denote the bitstrings that are not occurrence bitstrings without superscript, i.e., as $x$ and $y$ for bitstrings in LHS and RHS respectively. The norm of these bitstrings is defined by the weighted Hamming norms, i.e., 
\begin{equation}
    \|x\|_c = \sum_{u=1}^{l} c_u |(x)_u|,\;   \|y\|_d = \sum_{v=1}^{r} d_v |(y)_v|.
\end{equation}
Generally, we use the following theorem to prove that entropy inequality (\ref{eq:genentineq}) is HEI.
\begin{theorem}[`Proof by contraction']\label{thm:proofbycontraction}\cite{Bao:2015bfa}
    Let $f:\{0,1\}^l \to \{0,1\}^r $ be a $(\|\cdot\|_c - \|\cdot\|_d)$-contraction, i.e.,
    \begin{equation}
        \|x -x' \|_c \geq \|f(x) -f(x') \|_d ,\; \forall x,x'\in \{0,1\}^l.
    \end{equation}
    If  $f(x^{a_i})  = y^{a_i}$ for $\forall i\in \{1,\cdots, N+1\}$, 
    then (\ref{eq:genentineq}) is a HEI.
\end{theorem}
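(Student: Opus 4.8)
The plan is to run the standard geometric argument that underlies proof-by-contraction: translate the RT areas into an integral over a bulk cell decomposition induced by the left-hand-side surfaces, then use $f$ to build trial surfaces for the right-hand side whose areas are controlled term-by-term by the contraction condition. First I would fix an arbitrary static holographic geometry with the boundary partitioned into $a_1,\dots,a_{N+1}$, and for each LHS term $L_u$ let $r_{L_u}$ be the bulk homology region bounded by $L_u$ and its RT surface $\mathcal{X}_{L_u}$. To each bulk point $p$ I assign the bitstring $x(p)\in\{0,1\}^l$ with $(x(p))_u=1$ iff $p\in r_{L_u}$. The surfaces $\{\mathcal{X}_{L_u}\}$ cut the bulk into finitely many cells, each carrying a constant bitstring, and a codimension-one interface separating cells labeled $x$ and $x'$ lies on $\mathcal{X}_{L_u}$ precisely for those $u$ with $x_u\neq x'_u$. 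Hence the weighted LHS area decomposes as an integral over interfaces, an interface between $x$ and $x'$ being weighted by $\|x-x'\|_c$.

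Next I would push the decomposition through $f$. Define $y(p):=f(x(p))\in\{0,1\}^r$ and let $\tilde r_{R_v}:=\{p:\,(y(p))_v=1\}$, with boundary $\tilde{\mathcal{X}}_{R_v}:=\partial\tilde r_{R_v}$ a trial surface for $R_v$. By the same interface bookkeeping, the weighted area of the trial surfaces equals the integral over interfaces, now with an interface between $x$ and $x'$ weighted by $\|f(x)-f(x')\|_d$. The contraction hypothesis $\|x-x'\|_c\geq\|f(x)-f(x')\|_d$ holds on every interface, so integrating yields
\begin{equation}
    \sum_{u=1}^l c_u \,\area \mathcal{X}_{L_u}\;\geq\;\sum_{v=1}^r d_v \,\area \tilde{\mathcal{X}}_{R_v}.
\end{equation}

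It then remains to compare the trial surfaces to the true RT surfaces, and here the occurrence-data condition $f(x^{a_i})=y^{a_i}$ does the work. As $p$ approaches a boundary point of $a_i$, one has $x(p)\to x^{a_i}$, since near $a_i$ the point lies inside $r_{L_u}$ exactly when $a_i\subseteq L_u$; hence $y(p)\to y^{a_i}$, whose $v$-th entry is $1$ iff $a_i\subseteq R_v$. This shows $\tilde r_{R_v}$ meets the conformal boundary exactly along $R_v$, so $\tilde{\mathcal{X}}_{R_v}$ is a legitimate surface homologous to $R_v$. Minimality of the RT surface gives $\area\tilde{\mathcal{X}}_{R_v}\geq\area\mathcal{X}_{R_v}$ for each $v$, and since $d_v>0$, chaining with the displayed inequality and dividing by $4G_N$ recovers the claim via (\ref{eq:RTformula}).

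I expect the main obstacle to be the homology step rather than the contraction bookkeeping: one must check that the occurrence condition really guarantees correct boundary anchoring of every trial surface, so that RT minimality applies, even when $\tilde r_{R_v}$ is disconnected or non-simply-connected, and that no spurious boundary-anchored pieces are created away from $R_v$. Making the interface/cell decomposition and the additivity of area over interfaces fully rigorous (regularity of the minimal surfaces and measure-zero mutual intersections) is the other technical point, though it is by now standard in this literature.
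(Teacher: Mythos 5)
This theorem is not proved in the paper you were given---it is imported by citation from \cite{Bao:2015bfa}---so the comparison is against the standard proof in that reference. Your continuum cut-and-paste argument (label bulk cells by membership in the LHS homology regions, push the labels through $f$ to build trial RHS regions, use the occurrence condition $f(x^{a_i})=y^{a_i}$ to guarantee correct boundary anchoring and homology, and the contraction condition to control the interface-area bookkeeping, then invoke RT minimality) is precisely that standard proof, rendered in continuum RT language rather than the equivalent graph-model/min-cut phrasing of \cite{Bao:2015bfa}; it is correct, and the caveats you flag (regularity, measure-zero intersections, anchoring of possibly disconnected trial regions) are the standard ones.
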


    


$f(x^{a_i})  = y^{a_i} \text{ for } \forall i\in \{1,\cdots, N+1\}$ are the \textit{boundary conditions} through which the contraction map learns about the inequality. In the case of $(\alpha;\beta)$-inequalities (\ref{eq:toric-definition}), we have two sets $A=\{a_i\}_{i=1}^\alpha$ and $B=\{b_j\}_{j=1}^\beta$ of disjoint boundary subregions. The LHS and RHS occurrence bitstrings of $A$ are denoted as $x^{a_i}$ and $y^{a_i}$ respectively. Similarly, for $B$, we denote them by $x^{b_j}$ and $y^{b_j}$. For such inequalities, $l=\alpha \beta$ and $r=l+1$, in addition, $c_u$ and $d_v$ are all unity. Thus, the weighted Hamming distance becomes $\|x\|_c = \|x\|_1  = \sum_{u=1}^{l} |(x)_u|$ and $\|y\|_d = \|y\|_1 = \sum_{v=1}^{r} |(y)_v|$, where the subscript $\|.\|_1$ denotes that all Hamming weights are unity.

To prove that the toric inequalities (\ref{eq:toric-definition}) are HEIs, we use the following lemma (first introduced in \cite{Bao:2015bfa}),
\begin{lemma}{\cite{Bao:2015bfa}}\label{lem:contraction}
    For $x\in \{0,1\}^l$, define $f:\{0,1\}^l \to \{0,1\}^r $. If $f$ is a contraction for all the pairs $(x,x')$ such that $ \|x-x'\|_1 =1$, i.e.,
    \begin{equation}
        \|f(x)-f(x')\|_1\leq 1, \forall x,x'\in \{0,1\}^l \;s.t.\; \|x-x'\|_1 =1,
    \end{equation}
    then, $f$ is a contraction for all $x,x'$, i.e.
    \begin{equation}
        \|x-x'\|_1\geq \|f(x)-f(x')\|_1, \forall x,x'.
    \end{equation}
    
\end{lemma}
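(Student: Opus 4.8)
The plan is to prove the statement by connecting any pair $x,x'$ through a geodesic path in the Hamming cube $\{0,1\}^l$ and telescoping the single-bit-flip contraction hypothesis via the triangle inequality for the $\ell_1$ norm. The underlying principle is that $\{0,1\}^l$ equipped with $\|\cdot\|_1$ is a geodesic metric space whose geodesics are precisely the sequences of single-coordinate flips, so a map that contracts along every edge of the cube must contract between any two points.

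First I would fix arbitrary $x,x'\in\{0,1\}^l$ and set $k:=\|x-x'\|_1$, which counts exactly the coordinates on which $x$ and $x'$ disagree. If $k=0$ then $x=x'$ and both sides of the desired inequality vanish, so I may assume $k\geq 1$. Next I would construct an explicit intermediate sequence $x=x^{(0)},x^{(1)},\dots,x^{(k)}=x'$ by flipping, one at a time, each of the $k$ differing coordinates (in any fixed order). By this construction each consecutive pair differs in a single bit, so $\|x^{(m)}-x^{(m+1)}\|_1=1$ for every $m=0,\dots,k-1$, and the hypothesis of the lemma applies to each such pair, giving
\begin{equation}
    \|f(x^{(m)})-f(x^{(m+1)})\|_1\leq 1,\quad \forall m=0,\dots,k-1.
\end{equation}

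Then I would invoke the triangle inequality for $\|\cdot\|_1$ repeatedly along this path and telescope, obtaining
\begin{equation}
    \|f(x)-f(x')\|_1 \;\leq\; \sum_{m=0}^{k-1}\|f(x^{(m)})-f(x^{(m+1)})\|_1 \;\leq\; \sum_{m=0}^{k-1} 1 \;=\; k \;=\; \|x-x'\|_1,
\end{equation}
which is exactly the claimed global contraction property for all $x,x'$.

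The step requiring the most care — though it is hardly a serious obstacle — is the justification that the single-step bound chains up correctly: one must verify that $\|\cdot\|_1$ genuinely satisfies the triangle inequality (it is a norm, so this is immediate) and that the path decomposition has length exactly $k=\|x-x'\|_1$, i.e.\ that flipping one disagreeing coordinate at a time never revisits an agreement and terminates precisely at $x'$. Both facts follow directly from the definition of the Hamming norm, so the argument is essentially a clean induction on the Hamming distance with no combinatorial or geometric difficulty.
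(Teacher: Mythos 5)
Your proof is correct and is precisely the standard argument behind this lemma: the paper itself states it without proof, citing \cite{Bao:2015bfa}, and the proof in that reference is exactly your geodesic-path decomposition in the Hamming cube combined with the triangle inequality and telescoping. Nothing is missing; the single-bit-flip hypothesis chains up exactly as you describe.
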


Then, \cite{Czech:2023xed} proved the following theorem to prove the $(\alpha;\beta)$-inequalities (\ref{eq:toric-definition}) being HEIs. We review their proof in our language.
\begin{theorem}{\cite{Czech:2023xed}}\label{thm:toricproof}
    Consider a contraction map $f: \{0,1\}^l \to \{0,1\}^r$ satisfying $f(x^{a_i})= y^{a_i}$ and $f(x^{b_j})= y^{b_j}$ for $\forall i\in \{1,\cdots, \alpha\}$ and $\forall j\in \{1,\cdots, \beta\}$ for an $(\alpha;\beta)$-inequality. If $\|x-x'\|_1 = 1 $ for $x,x'\in \{0,1\}^l$, then, $\|f(x)-f(x')\|_1 = 1 $.
\end{theorem}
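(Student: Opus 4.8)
The plan is to treat this as a construction-and-verification statement rather than an implication from abstract hypotheses: the conclusion $\|f(x)-f(x')\|_1=1$ cannot follow from ``$f$ is a contraction satisfying the boundary conditions'' alone, since a generic contraction meeting the finitely many constraints $f(x^{a_i})=y^{a_i}$, $f(x^{b_j})=y^{b_j}$ may still collapse edges of the hypercube elsewhere. So the first step is to write down the \emph{specific} geometric map read off from $T_\mR$. Viewing a bitstring $x\in\{0,1\}^l$ as a two-colouring of the faces of $T_\mR$ (recall $V_\mL\equiv\mL$ is dual to the faces of $T_\mR$, so each LHS term labels a face), the map returns a colouring $f(x)$ of the vertices $V_\mR\equiv\mR$ together with one extra bit for the ungeometrized term $S_A$. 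I would define $f(x)_{R_v}$ by a fixed Boolean rule in the labels of the four faces incident to $R_v$ — the two in $\overline{Inc}(R_v)$ and the two in $Exc(R_v)$ — designed so that $R_v$ is lit exactly when the EWN-dictated domain wall between the $1$- and $0$-faces passes through it in a prescribed (oriented) way; the last coordinate $f(x)_{S_A}$ is set to a global datum, namely a $\mathbb{Z}_2$ homology class recording whether the collection of $1$-faces wraps a chosen non-contractible cycle of the torus.

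The second step is to verify the boundary conditions. Each occurrence bitstring $x^{a_i}$ lights precisely the faces $L_{u}=a^+_{i'}b^-_{j'}$ containing $a_i$, which form a band of width $\tfrac{\alpha+1}{2}$ wrapping the $B$-direction; dually $x^{b_j}$ lights a band wrapping the $A$-direction. Evaluating the local rule along such a band is a direct computation that reproduces the support of $y^{a_i}$ and $y^{b_j}$, while the global bit distinguishes the two cases cleanly, since $S_A$ is on for every $x^{a_i}$ (as $a_i\subseteq A$) and off for every $x^{b_j}$ (as $b_j\not\subseteq A$) — exactly the wrapping datum the last coordinate was built to detect.

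The crux is the single-flip property. If $\|x-x'\|_1=1$ the colourings differ on one face $L_{u_0}$, so by locality the only RHS coordinates that can change are the (at most four) vertices of the cell $L_{u_0}$, i.e. $Inc(L_{u_0})\cup Exc(L_{u_0})$, together with $S_A$. The task is to show that exactly one of them flips, and I would do this by a finite case analysis over the colourings of the faces surrounding $L_{u_0}$: toggling $L_{u_0}$ shifts the domain wall across a single atomic rhombus, which either slides it past one vertex (flipping that one vertex bit) or completes/undoes a wrap of the torus (flipping $S_A$). \emph{The main obstacle is precisely this bookkeeping}: the naive count allows up to four changed vertex bits, and the real content is showing that the oriented design of the rule forces the contributions at the four corners of $L_{u_0}$ to cancel down to a single surviving flip, with the homological $S_A$ bit accounting exactly for the degenerate configurations in which no vertex changes. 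Once the bound ``$\le 1$'' is obtained this way, Lemma \ref{lem:contraction} promotes $f$ to a global contraction and, together with the boundary conditions, Theorem \ref{thm:proofbycontraction} gives that the $(\alpha;\beta)$-inequality is an HEI; the sharper ``$=1$'' simply records that the construction never collapses an edge of the hypercube.
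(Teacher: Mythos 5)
Your overall architecture (geometrize $x$ on the torus, check the occurrence bitstrings, establish the single-flip property, then invoke Lemma~\ref{lem:contraction} and Theorem~\ref{thm:proofbycontraction}) is the paper's, and you are right that the theorem is really a statement about a specific constructed map. But your construction differs from the paper's in a way that is fatal: you define the vertex bits by a \emph{local} Boolean rule, a fixed function of the four faces incident to $R_v$, and you defer the crux (``the contributions at the four corners cancel down to a single surviving flip'') to an unspecified ``oriented design''. No such local rule exists, and this is precisely why Rule~\ref{rul:toric} of the paper is global: a vertex receives a $1$ iff it is enclosed by a \emph{contractible} knot, and when one knot encloses several vertices only the single distinguished (right-most, bottom-most) vertex receives a $1$; neither contractibility nor the choice of distinguished vertex is determined by the four incident faces. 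Concretely, for $(3;3)$ the boundary conditions force any local rule to output $1$ on the pattern ``both faces of $\overline{Inc}(R_v)$ lit, both faces of $Exc(R_v)$ unlit'', since that is the local pattern of every lit vertex of $y^{a_i}$ and of $y^{b_j}$. Now take $x$ with only $a_1a_2b_1$ and $a_3a_1b_1$ lit and flip $a_2a_3b_1$, completing the row $x^{b_1}$: both $a_2b_1$ and $a_3b_1$ acquire that pattern simultaneously (before the flip each was missing exactly one $\overline{Inc}$ face), so unless the rule already outputs $1$ on a single-lit-face pattern, two vertex bits flip at once and $\|f(x)-f(x')\|_1\geq 2$. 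Making the rule sensitive to a single lit face only relocates the failure (e.g.\ one then checks that the configurations built from $a_3a_1b_2$ and $a_1a_2b_3$ force contradictory values on the remaining patterns, again producing a double flip). The paper's global rule survives exactly because the number of geometric $1$'s equals the number of contractible knots, and the knot-sum case analysis ($K_c\#K_c'$, $K_{nc}\#K_c$, $K_{nc}\#K_{nc}'$, and the self-sum) shows that this global count changes by at most one under a face flip.

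A second, related gap is your non-geometric bit. Given that the geometric bits change by $0$ or $1$ under a single flip, the claimed equality $\|f(x)-f(x')\|_1=1$ \emph{forces} the last bit to equal $\|x\|_1+\|\tilde{f}(x)\|_1 \bmod 2$ up to a constant; this is exactly the paper's rule (\ref{eq:toricrule2}), and it is what handles the self-knot-sum case, where no geometric bit changes. You instead posit a $\mathbb{Z}_2$ wrapping class of the set of $1$-faces, which you never show coincides with this parity, and in general it does not: for example, if your oriented rule lights no corner of the small square wall around an isolated lit face, then the edge between $x=0^l$ and $x'=e_u$ is collapsed (no geometric bit and no wrapping datum changes), contradicting the equality you are trying to prove. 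Note also that the paper's knots are not domain walls between $1$- and $0$-faces: every face carries two segments (horizontal for $0$, vertical for $1$), so the all-zero bitstring already carries $\beta$ non-contractible loops; this distinction is essential to the counting, and your domain-wall reading loses it.
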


\begin{proof}

$(\alpha;\beta)$-inequalities are represented by the toroidal graph $T_\mR$ on a $2$-torus described in the previous subsection. We geometrize the bitstrings $x$ and $y$ to construct a contraction map geometrically to prove the theorem. 

\begin{figure}
    \centering
    \includegraphics[width=0.6\linewidth]{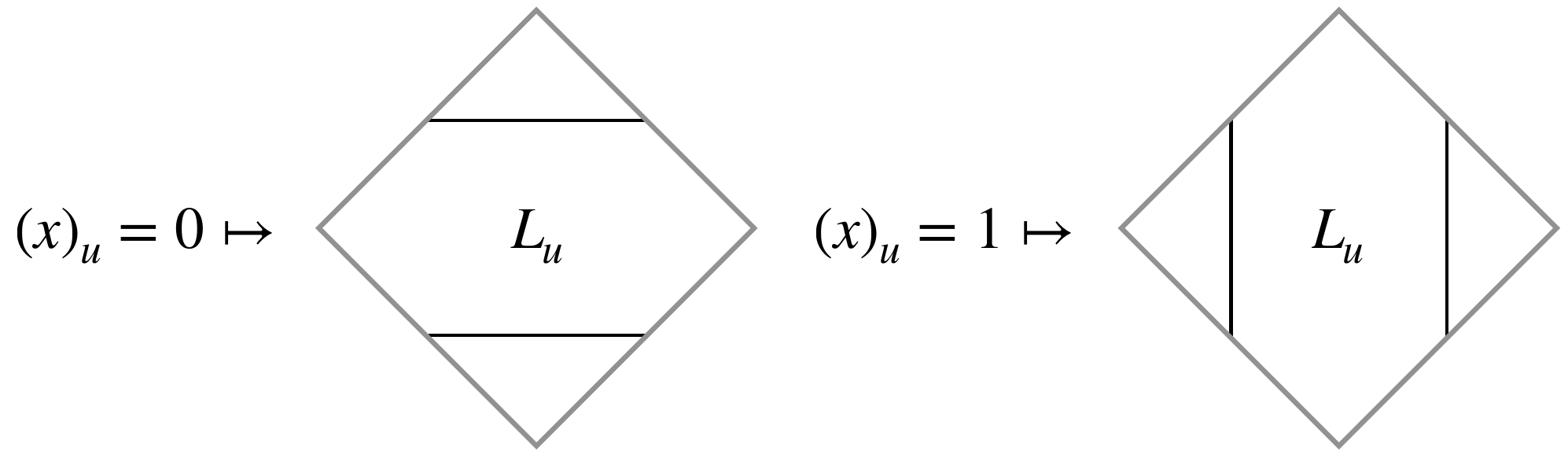}
    \caption{\small{A face $L_u$ is colored with $0$ or $1$ if the $u$-th element $(x)_u$ of bitstring $x$ is $0$ or $1$. When $(x)_u=0$, we add two parallel horizontal line segments connecting adjacent edges. When $(x)_u=1$, we add two parallel vertical line segments connecting adjacent edges.}}
    \label{fig:bitstringgeometry}
\end{figure}

Consider a bitstring $x =\{0,1\}^l$. 
We color each face $L_u\in\mL$ of $T_\mR$ with a bit $\{0,1\}$. When a face is colored with $0$, we assign the two horizontal parallel line segments connecting the middle of the adjacent edges. When it is colored with $1$, then we assign the two vertical parallel line segments, see figure \ref{fig:bitstringgeometry}. Thus, for a given $x$, the connected line segments give a set $\Gamma(x)$ of knots\footnote{We call it 'knots' instead of 'loops' used in \cite{Czech:2023xed}.} on the torus. 

These line segments form a set $\Gamma(x)$ of $(p,q)$-torus knots on the right toroidal graph $T_\mR$ where $p$ is a winding number around the longitudinal direction, and $q$ is a winding number around the meridian direction. In general, there are two types of knots: i) non-contractible $(p,q)$-knots $K_{nc}$ for $p\neq 0$ or $q\neq0$, and ii) contractible $(0,0)$-knots, $K_{c}$. These knots run over only faces and edges, not over vertices by the construction. Furthermore, there are no intersecting knots. 


The geometric assignment of $y\in\{0,1\}^r$ follows from the set of \textit{rules} given in \cite{Czech:2023xed} and summarized below. This also defines the geometric contraction map $f$. Let us call the first $l$ bits as \textit{geometric bits}, which correspond to the vertices of the $T_\mR$. We call the last single bit as a \textit{non-geometric bit} because the last bit does not explicitly appear in the geometry. Thus, the candidate geometric contraction map can be represented by
\begin{equation} \label{eq:contractionmap-toric}
    f(x) = \tilde{f}(x) \times \tilde{f}_{ng}
\end{equation}
where $\tilde{f}(x)\in\{0,1\}^l$ is the $l$-tuple of geometric bits and $\tilde{f}_{ng}\in\{0,1\}^{r-l=1}$ is the non-geometric bit corresponding to $S_A$ in (\ref{eq:toric-definition}). Below, we give the rules to read off the bitstrings from the color configuration of the faces $F_\mR$.

\begin{figure}
    \centering
    \includegraphics[width=0.5\linewidth]{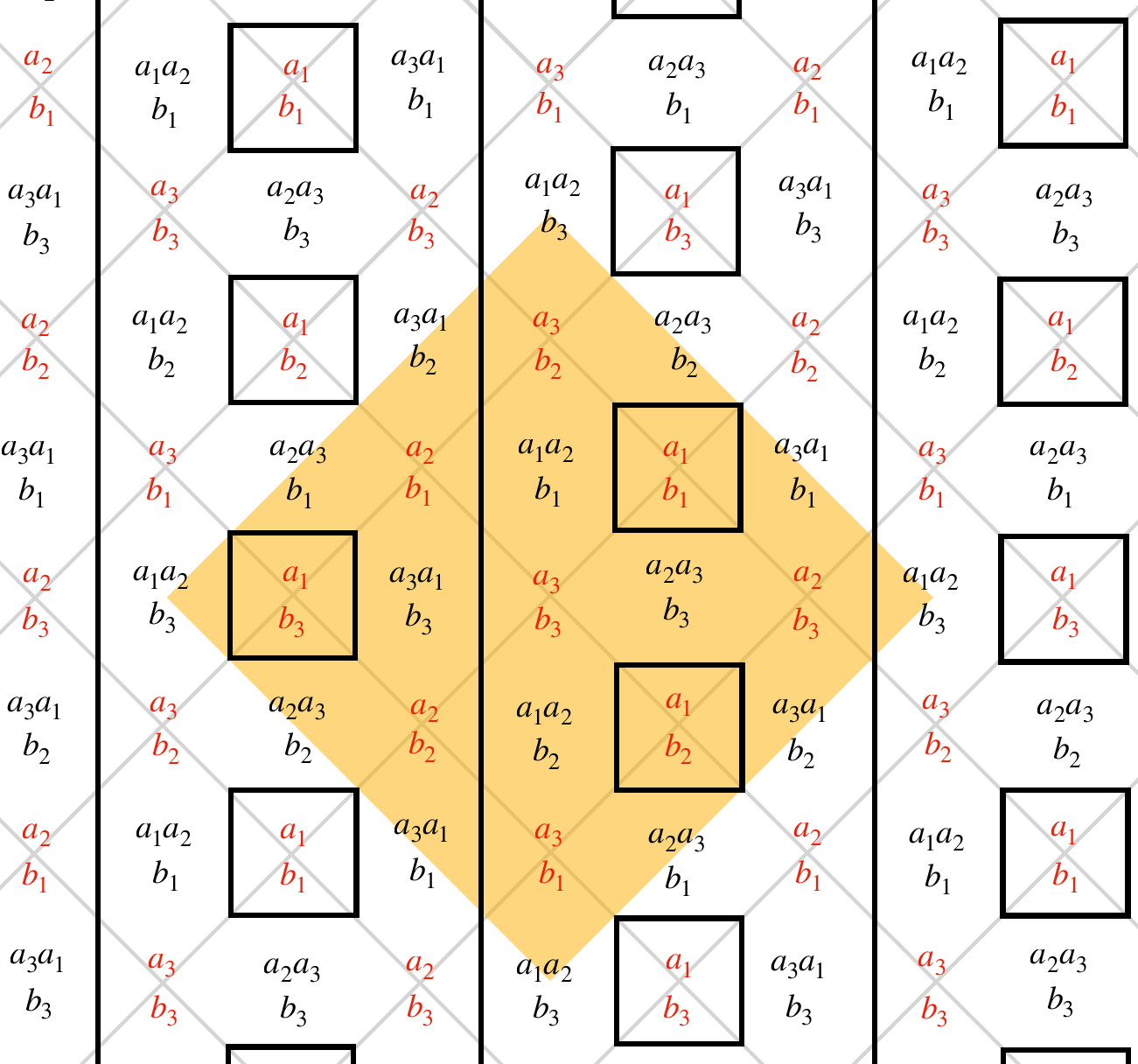}
    \caption{\small{Boundary condition, $f(x^{a_1})=y^{a_1}$, of $(3;3)$-inequality: The yellow ochre region is the fundamental domain of the torus. Black lines are the knots. $1$ is assigned to all the vertices enclosed by the contractible knots; otherwise, $0$.   }}
    \label{fig:boundarycondition}
\end{figure}

\begin{rules}\label{rul:toric}\

\noindent
\underline{1. Rules on geometric bits $\tilde{f}(x)$, vertices}

On the toroidal graph $T_\mR$ of $(\alpha;\beta)$-inequalities,
\begin{enumerate} \label{rule1}
    \item[a)] Assign $1$ to a vertex $R_v\in V_\mR= \mR$ enclosed by a contractible knot. 

    \item[b)] If a knot encloses multiple vertices, assign $1$ to the right-most and bottom-most vertex. Assign $0$ to all the other vertices inside the knot.
\end{enumerate}

\noindent
\underline{2. Rule on the non-geometric bit $\tilde{f}_{ng}$}

The non-geometric bit $\tilde{f}_{ng}$ is determined by
\begin{equation} \label{eq:toricrule2}
    \|x\|_1=\|f(x)\|_1\ \text{mod}\ 2.
\end{equation}

\end{rules}
By Rule \ref{rul:toric}, the map $f$ satisfies the boundary conditions $f(x^{a_i})=y^{a_i}$ and $f(x^{b_j})=y^{b_j}$, see figure \ref{fig:boundarycondition}.

Now, we show that $\|x-x'\|_1=1$ implies $\|f(x)-f(x')\|_1=1$. The geometrical operation of $x\mapsto x'$ such that $\|x-x'\|_1=1$ corresponds to a bit-flip, i.e., $0\to 1$, on a single face. This results in either a knot sum of adjacent knots on the torus or splitting one knot into two knots, see figure \ref{fig:knotsums}. For two knots $K_1$ and $K_2$, the knot sum connects two knots, joining them by two bars and denoted by
\begin{equation}
    K_1 \# K_2.
\end{equation}
$\|f(x)-f(x')\|_1$ counts the change of number of contractible knots before and after the bit-flip. This implies that only one contractible knot can appear or disappear after the geometric operation. Since we only have two types of non-intersecting knots, contractible ones $K_c$ and non-contractible ones $K_{nc}$, there are only four types of knot sum as follows, see figure \ref{fig:knotsums}. For contractible knots $K_c$, $K'_c$, $K''_c$ and non-contractible knots $K_{nc}$, $K'_{nc}$, 
\begin{equation}
    K_c \#K'_c =K_c'',\;K_{nc} \#K_{c} =K'_{nc},\;K_{nc} \#K'_{nc} =K''_{c},\;K_{nc} \#K_{nc} =K'_{nc}.
\end{equation}
where the last sum represents the self-knot sum. The number of contractible knots changes by $1$,  except for the last sum. In these cases, the $r$-th non-geometric bit does not flip by the second rule (\ref{eq:toricrule2}). Hence, $\|x-x'\|_1=1$ implies $\|f(x)-f(x')\|_1=1$.

For the self-knot sum, the number of contractible knots does not change. By the second rule (\ref{eq:toricrule2}), the $r$-th non-geometric bit flips. Thus, $\|f(x)-f(x')\|_1=1$. Therefore, $\|x-x'\|_1=1$ implies $\|f(x)-f(x')\|_1=1$.

\end{proof}

Lemma \ref{lem:contraction} and Theorem \ref{thm:toricproof} prove the following corollary.
\begin{corollary}\cite{Czech:2023xed} \label{cor:toric-HEIs}
    Toric inequalities, or $(\alpha;\beta)$-inequalities are HEIs.
\end{corollary}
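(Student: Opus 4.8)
The plan is to combine the two preceding results (Lemma \ref{lem:contraction} and Theorem \ref{thm:toricproof}) with the general proof-by-contraction machinery of Theorem \ref{thm:proofbycontraction}. The logic is a short chain of implications, so the corollary should require essentially no new work beyond carefully stitching together what has already been established. First I would observe that Theorem \ref{thm:toricproof} constructs an explicit candidate map $f:\{0,1\}^l\to\{0,1\}^r$ (given geometrically by Rules \ref{rul:toric}) that satisfies the boundary conditions $f(x^{a_i})=y^{a_i}$ and $f(x^{b_j})=y^{b_j}$ for all $i$ and $j$, and that this same theorem shows $\|x-x'\|_1=1$ implies $\|f(x)-f(x')\|_1=1$ for all adjacent pairs.

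Next I would invoke Lemma \ref{lem:contraction} to upgrade this nearest-neighbor contraction property to a global one. Since the lemma states that being a contraction on all Hamming-adjacent pairs ($\|x-x'\|_1=1$) is sufficient to be a contraction on \emph{all} pairs, and Theorem \ref{thm:toricproof} gives us precisely $\|f(x)-f(x')\|_1=1\le 1$ on adjacent pairs, we immediately conclude that $\|x-x'\|_1\ge\|f(x)-f(x')\|_1$ for every $x,x'\in\{0,1\}^l$. In other words, $f$ is a genuine $(\|\cdot\|_1-\|\cdot\|_1)$-contraction in the sense required by Theorem \ref{thm:proofbycontraction}.

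Finally I would feed this into Theorem \ref{thm:proofbycontraction}: we now have a contraction map $f$ that also satisfies all the boundary conditions $f(x^{a_i})=y^{a_i}$ (and likewise for the $b_j$), which are exactly the hypotheses of the proof-by-contraction theorem for the $N=\alpha+\beta-1$ party inequality with all coefficients $c_u=d_v=1$. The theorem then certifies that the $(\alpha;\beta)$-inequality \eqref{eq:toric-definition} is a HEI, which is the claim of the corollary.

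Since every ingredient has already been proven, I do not expect any serious obstacle; the only point requiring care is bookkeeping — verifying that the map $f$ produced by the geometric rules is defined on the full domain $\{0,1\}^l$ (not merely on occurrence bitstrings) and that the two families of boundary conditions, for $A$ and for $B$, together account for all $N+1$ single regions $\{a_i\}\cup\{b_j\}$ as demanded by Theorem \ref{thm:proofbycontraction}. One subtlety worth flagging is that the non-geometric bit rule \eqref{eq:toricrule2} is what guarantees the contraction survives the self-knot-sum case, so the global contraction property genuinely depends on the full construction of $f$ and not just its restriction to the geometric bits; but this is already handled inside Theorem \ref{thm:toricproof}, so the corollary itself is just the assembly of these pieces.
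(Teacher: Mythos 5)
Your proposal is correct and takes essentially the same route as the paper: the paper's proof of this corollary is precisely the observation that Lemma \ref{lem:contraction} upgrades the nearest-neighbor contraction property established in Theorem \ref{thm:toricproof} to a full contraction, which together with the boundary conditions feeds into Theorem \ref{thm:proofbycontraction} to certify the $(\alpha;\beta)$-inequalities as HEIs. Your additional bookkeeping remarks (the map being defined on all of $\{0,1\}^l$ and the role of the non-geometric bit in the self-knot-sum case) are points the paper handles inside the proof of Theorem \ref{thm:toricproof}, exactly as you note.
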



\begin{figure}[t]
    \centering
    \includegraphics[width=0.8\linewidth]{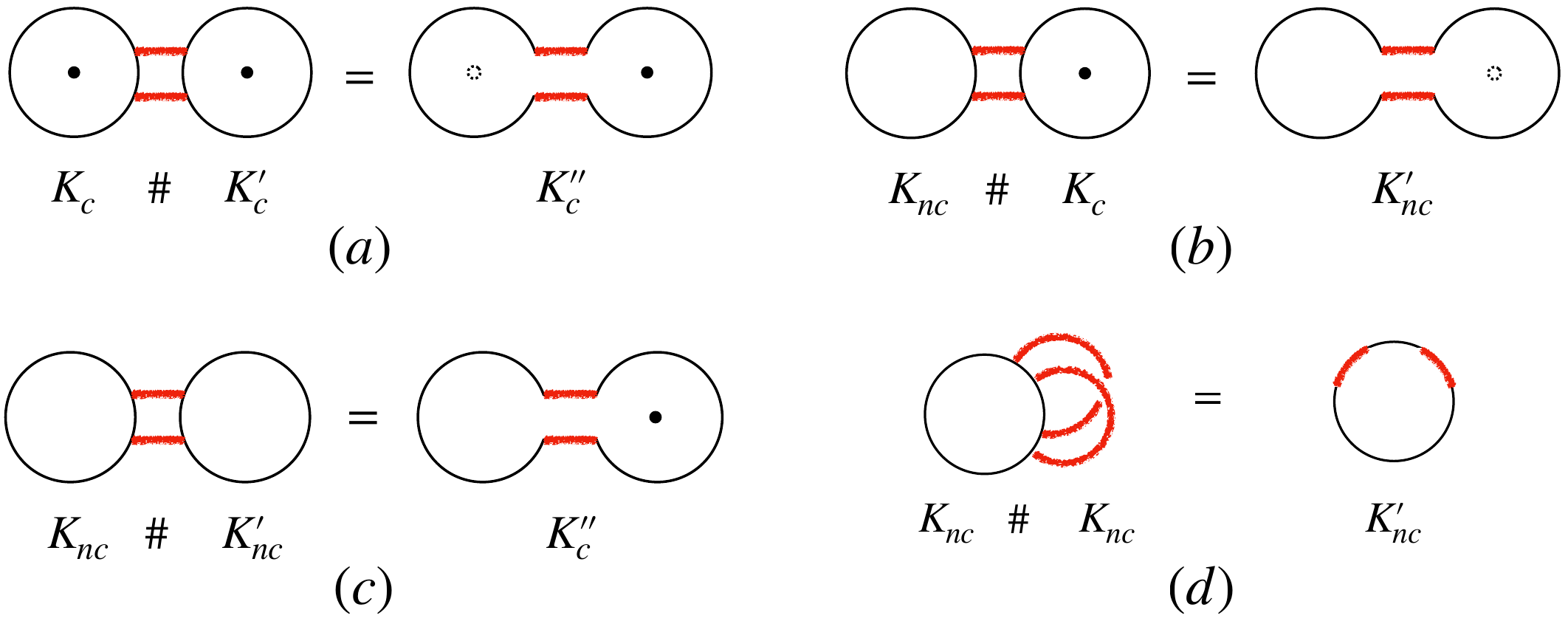}
    \caption{\small{All possible knot sums of non-intersecting knots. Black lines are the knots. They consist of the sequence of line segments defined by the geometrization of a LHS bitstring. The contractible knots have at least a single vertex colored with $1$. Thus, we represent the contractible knots with a black dot at their center. The non-contractible knots do not have a black dot at their center. The vertex color changes from $0$ to $1$ or vice versa after the knot sum, represented by a dotted circle. The red line segments define the knot sum between two knots. The knot sum between (a)two contractible knots, (b)a non-contractible knot and a contractible knot, (c)two non-contractible knots, and (d)itself.}}
    \label{fig:knotsums}
\end{figure}

\section{Generalized Toric Conjectures }\label{sec:generalized-toric} 

We define the generalized toric conjectures, or $(\alpha_1,\cdots,\alpha_{n_\alpha};\beta_1,\cdots,\beta_{n_\beta})$-conjectures, to be
\begin{equation}\label{eq:inequality-gen-toric-extension}
\begin{split}
    \sum_{i_1,\cdots, i_{n_\alpha}=1}^{\alpha_1,\cdots,\alpha_{n_\alpha}} \sum_{j_1,\cdots, j_{n_\beta}=1}^{\beta_1,\cdots,\beta_{n_\beta}} S_{a^+_{i_1}\cdots a^+_{i_{n_\alpha}} b^-_{j_1}\cdots b^-_{j_{n_\beta}}}  \geq \sum_{i_1,\cdots, i_a=1}^{\alpha_1,\cdots,\alpha_{n_\alpha}} &\sum_{j_1,\cdots, j_b=1}^{\beta_1,\cdots,\beta_{n_\beta}}  S_{a^-_{i_1}\cdots a^-_{i_{n_\alpha}} b^-_{j_1}\cdots b^-_{j_{n_\beta}}}\\
    & + \textit{``non-geometric terms''}
\end{split}
\end{equation}

Note that the generalized toric conjectures for any arbitrary parameters do not necessarily give a valid HEI \footnote{Some conjectures failed to produce a valid contraction map. By the completeness argument\cite{Bao:2024contraction_map}, they are ruled out as invalid.}. The terms that fill \textit{``non-geometric terms''} in Section \ref{sec:balance-superbalance} are determined, but not exhaustively, by the balance and superbalance conditions \cite{Bao:2015bfa,HEarrangement2019Hubeyetal,HErepackaged2019He,superbalance2020He,Hernandez-Cuenca:2023iqh}.

In this section, we first extend the geometric contraction map of toric inequalities, or $(\alpha;\beta)$-inequalities, to the generalized toric conjectures. Then, we apply the extended geometric contraction map to a few examples of the generalized toric conjectures in section \ref{sec:examples}.


\subsection{Graphical representation of $(\alpha_1,\cdots,\alpha_{n_\alpha};\beta_1,\cdots,\beta_{n_\beta})$-conjectures}\label{sec:proof-methods}
We generalize the constructions \ref{thm:tile-torus} and \ref{thm:graph-torus} to constructions \ref{thm:tile-general} and \ref{thm:graph-general} respectively, as applicable to the generalized toric conjectures. Although both approaches construct the same graph, the latter simplifies the proof.

Before we proceed to the construction, we state the basic property regarding the sets $Inc,Exc,\overline{Inc}$ for $(\alpha_1,\cdots,\alpha_{n_\alpha};\beta_1,\cdots,\beta_{n_\beta})$-conjectures.

\begin{lemma}\label{lem:cardinalities-general}

    Consider $(\alpha_1,\cdots,\alpha_{n_\alpha};\beta_1,\cdots,\beta_{n_\beta})$-conjectures. The cardinalities of the sets $Inc(L_u)$, $Exc(L_u)$, $\overline{Inc}(R_v)$, $Exc(R_v)$ are given by
\begin{equation}\label{eq:cardinalitylhs}
     |Inc(L_u)|  = \sum_{s=1}^{n_\alpha}(2-\delta_{\alpha_s,1}), |Exc(L_u)|=\sum_{t=1}^{n_\beta}(2-\delta_{\beta_t,1}),\;  \forall L_u \in \mL
\end{equation}
and 
\begin{equation}\label{eq:cardinalityrhs}
     |\overline{Inc}(R_v)|=\sum_{s=1}^{n_\alpha} (2-\delta_{\alpha_s,1}), |Exc(R_v)|=\sum_{t=1}^{n_\beta}(2-\delta_{\beta_t,1}) ,\; \forall R_v \in \mR
\end{equation}
where $\delta_{\alpha_s,1}$ and $\delta_{\beta_t,1}$ are the Kronecker deltas.
\end{lemma}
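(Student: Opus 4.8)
The plan is to compute all four cardinalities by reducing each one to a per-copy count and then summing over copies, the appearance of a sum (rather than a product) being the delicate feature to establish. First I would fix an LHS term, written in index form as $L_u = a^+_{i_1}\cdots a^+_{i_{n_\alpha}}\,b^-_{j_1}\cdots b^-_{j_{n_\beta}}$, and examine $Inc(L_u)$, the set of RHS terms $R_v=a^-_{i'_1}\cdots a^-_{i'_{n_\alpha}}\,b^-_{j'_1}\cdots b^-_{j'_{n_\beta}}$ that are EWN-nested in $L_u$. Because the monochromatic regions of distinct copies $A_s,B_t$ are mutually disjoint and the $(\pm)$-operations act within a single copy, the nesting relation factorizes copy by copy: in each copy it is controlled by the containment of one cyclic interval in another, exactly as in the single-parameter analysis reviewed in Section \ref{sec:review-toric}.

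In the $A_s$-factor, $a^-_{i'_s}$ lies inside $a^+_{i_s}$ precisely when the length-$\tfrac{\alpha_s-1}{2}$ cyclic interval fits inside the length-$\tfrac{\alpha_s+1}{2}$ one, which happens for exactly $2-\delta_{\alpha_s,1}$ values of $i'_s$; the Kronecker delta records the degenerate copy $\alpha_s=1$, where the two admissible shifts collapse to a single one, contributing $1$ rather than $2$. For the $B_t$-factors the equal-length $(-)$-regions force $j'_t=j_t$, so the $b$-data is frozen in $Inc$. The crux is then to show that the elements of $Inc(L_u)$ are obtained by a reduction in a single $a$-copy at a time --- equivalently, that they are exactly the neighbors of $L_u$ along the $a$-type cycle graphs $C^+_{\alpha_s}$ of Definition \ref{def:cyclegraphs} under the graph Cartesian product of Definition \ref{def:complementgraphproduct}. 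Granting this, the additivity of adjacency under $\Box$, namely $\deg_{G\Box H}(g,h)=\deg_G(g)+\deg_H(h)$, gives $|Inc(L_u)|=\sum_{s=1}^{n_\alpha}(2-\delta_{\alpha_s,1})$ at once.

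The count of $Exc(L_u)$ is identical after replacing $L_u$ by its complement $\bar L_u$: complementation turns each $b^-_{j_t}$ into the longer region $b^+$ and each $a^+_{i_s}$ into an equal-length $a^-$, so now the $a$-data is frozen while each $B_t$-factor contributes $2-\delta_{\beta_t,1}$ admissible $j'_t$, yielding $|Exc(L_u)|=\sum_{t=1}^{n_\beta}(2-\delta_{\beta_t,1})$, which is the second half of (\ref{eq:cardinalitylhs}). The two statements for $R_v\in\mR$ in (\ref{eq:cardinalityrhs}) then follow from the duality $T_\mR=\hat T_\mL$: under the vertex--face correspondence the sets $\overline{Inc}(R_v)$ and $Exc(R_v)$ play on $V_\mR$ exactly the roles that $Inc(L_u)$ and $Exc(L_u)$ play on $V_\mL$, so the same factorized per-copy counting applies verbatim (with $a^+\leftrightarrow a^-$).

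I expect the main obstacle to be the crux identified above: justifying that the EWN data attached to the generalized terms records only single-copy reductions, so that the cardinalities add rather than multiply. Concretely, one must argue that a simultaneous reduction in two or more copies is not counted as a distinct element of $Inc(L_u)$ in the generalized construction, which is precisely what separates the sum $\sum_s(2-\delta_{\alpha_s,1})$ from the naive product $\prod_s(2-\delta_{\alpha_s,1})$ that an unrestricted containment count would produce; the two agree for $n_\alpha\le 2$ but diverge beyond, so this is where the real content lies. A careful treatment of the degenerate factors $\alpha_s=1$ and $\beta_t=1$, where the cycle graph $C^\pm$ collapses to a single vertex, should be folded into the same argument so that such a factor simply contributes $1$ to the relevant sum.
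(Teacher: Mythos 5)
The paper states Lemma \ref{lem:cardinalities-general} without proof, so there is no in-paper argument to compare yours against; the proposal has to stand on its own. Its per-copy analysis is correct as far as containment goes: the nesting question factorizes over the disjoint copies, each $A_s$-factor admits exactly $2-\delta_{\alpha_s,1}$ admissible shifts $i'_s\in\{i_s,i_s+1\}$ inside $a^+_{i_s}$, the $b$-data is frozen in $Inc(L_u)$, and complementation and duality reduce $Exc(L_u)$, $\overline{Inc}(R_v)$ and $Exc(R_v)$ to the same computation. You have also put your finger on exactly the right issue: whether the per-copy counts add or multiply.

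The gap is the step you ``grant,'' and it cannot be closed under the paper's own definitions, because it is false there. $Inc$ and $Exc$ are defined in Section \ref{sec:review-toric} purely by EWN containment ranging over all of $\mR$ (resp. $\mL$), and containment does not exclude simultaneous reductions in several copies: for $n_\alpha=2$, $n_\beta=1$ (copies labeled as in (\ref{eq:color-conjecture})), the doubly shifted term $a^-_{i_1+1}b^-_{i_2+1}c^-_{j}$ is contained in $L_u=a^+_{i_1}b^+_{i_2}c^-_{j}$ just as well as the singly shifted ones. Hence the containment count is the product $\prod_s(2-\delta_{\alpha_s,1})$, and the lemma as literally stated fails whenever sum and product differ: for a $(3,3,3;3)$-conjecture containment gives $2^3=8$ against the asserted $6$; and your claim that the two agree for $n_\alpha\le 2$ breaks down for degenerate copies --- for $(3,1;3)$, precisely the example of Section \ref{sec:examples}, containment gives $2$ against the asserted $3$. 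The sum formula in (\ref{eq:cardinalitylhs})--(\ref{eq:cardinalityrhs}) is really the adjacency, or dual-vertex, count in which only one copy's index is shifted at a time; this is what the octahedral atomic cells of figure \ref{fig:21-tilingcell} and the hypercube-facet matching in Section \ref{sec:discussions} presuppose, and under that reading your degree-additivity argument for the Cartesian product is correct and is evidently the intended justification. But that reading is a redefinition of $Inc$/$Exc$, not a consequence of the stated one. So the proposal cannot be completed into a proof of the statement as written: what you identified as the crux is actually a mismatch between the paper's definition and the lemma's sum formula, which must be repaired (redefine $Inc$/$Exc$ as single-copy-shift neighbors, or replace the sums by products) before any proof can go through.
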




    

    

For $(\alpha_1,\cdots,\alpha_{n_\alpha};\beta_1,\cdots,\beta_{n_\beta})$-conjectures, we construct the atomic cells and the whole geometry in $(n_\alpha+n_\beta)$-dimensional space. 
The geometrization of $ (\alpha_1, \cdots, \alpha_{n_\alpha}; \beta_1, \cdots, \beta_{n_\beta}) $-conjectures is similar to the one given in construction \ref{thm:tile-torus}. However, the construction of atomic cells needs extra care since $|Inc(L_u)|$ and $|Exc(L_u)|$ are generally more than $2$. We choose the geometry of atomic cells such that the atomic cells in a two-dimensional subspace of $(n_\alpha+n_\beta)$-dimensional space reduce to rhombi. Then, we place the pair of elements of $Inc(L_u)$ and that of $Exc(L_u)$ diagonally on the vertices of the rhombus labeled with $L_u$ in the two-dimensional subspace spaces. We summarize the tiling construction below.

\begin{figure}
    \centering
    \includegraphics[width=0.8\linewidth]{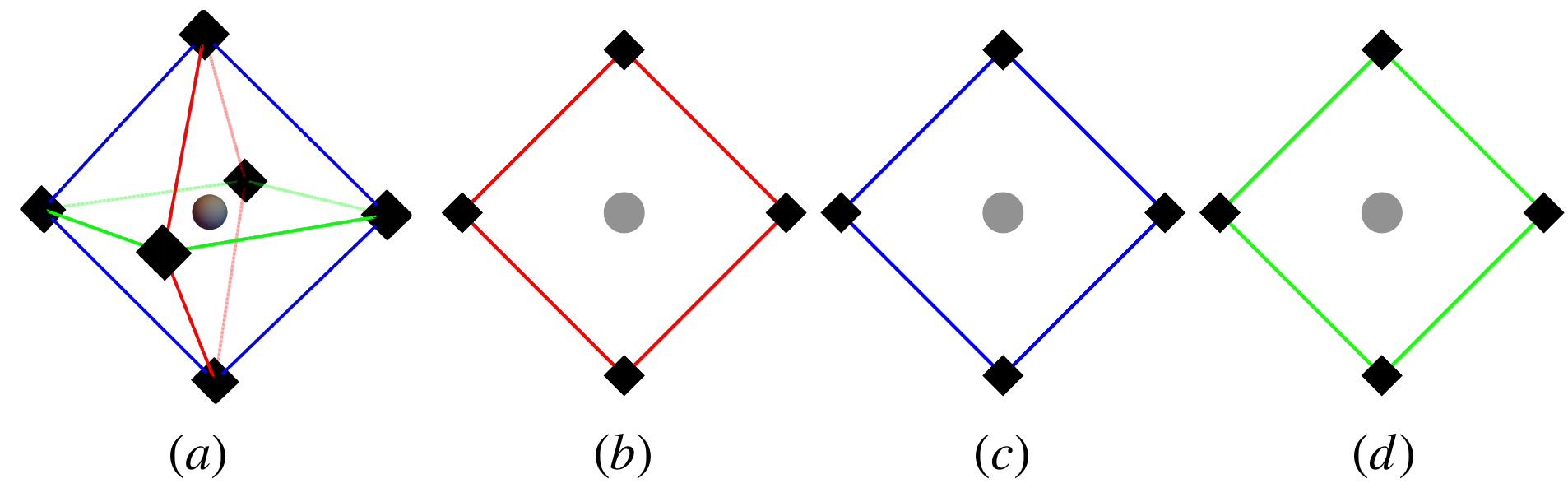}
    \caption{\small{For $n_\alpha=2$, $n_\beta=1$, (a) is an atomic cell in three-dimensional space such that it reduces to a rhombus in every two-dimensional space. The gray solid sphere is a volume of the octahedron to which $L_u$ are assigned. The black square dots are the vertices to which the elements of $Inc(L_u)$ and $Exc(L_u)$ are assigned. Here, $|Inc(L_u)|=4$ and $|Exc(L_u)|=2$ assuming that $\alpha_1,\alpha_2 \neq 1$ and $\beta_1 \neq 1$ for simplicity. The edges of each rhombus in each two-dimensional space of (a) are colored with (b) red, (c) blue, and (d) green. The choice of colors here is independent from the colors of the other figures and equations.}}
    \label{fig:21-tilingcell}
\end{figure}

\begin{construction}[Geometrizing the generalized toric conjectures: Tiling method, figure \ref{fig:21-tilingcell}] \label{thm:tile-general} \
     Consider a $(\alpha_1,\cdots,\alpha_{n_\alpha};\beta_1,\cdots,\beta_{n_\beta})$-conjecture, and its $\mL$ and $\mR$.
    \begin{enumerate}
        \item For $\mL$ and $\mR$, determine the four sets $Inc(L_u),Exc(L_u),\overline{Inc}(R_v), Exc(R_v)$. 
        \item Consider $(n_\alpha+n_\beta)$-dimensional space whose coordinates correspond to the sets $A_1=\{a_{i_1}\}_{i_1=1}^{\alpha_1}, \cdots,A_{n_\alpha}=\{a_{i_{n_\alpha}}\}_{i_{n_\alpha}=1}^{\alpha_{n_\alpha}}, B_1=\{b_{j_1}\}_{j_1=1}^{\beta_1}, \cdots , B_{n_\beta}=\{b_{j_{n_\beta}}\}_{j_{n_\beta}=1}^{\beta_{n_\beta}}$.

        \item Choose the geometry of the atomic cells such that its tiling reduces to rhombi in every two-dimensional subspace. 
 
        \item Construct the atomic cells in the $(n_\alpha+n_\beta)$-dimensional space by placing the pair of the elements of $Inc(L_u)$ and that of $Exc(L_u)$ diagonally on the vertices of the rhombus labeled with $L_u$ in every two-dimensional subspace.

        \item Build the whole geometry by combining the atomic cells by identifying the vertices with the same label.
        
    \end{enumerate}
    

     
\end{construction}








Now, we extend construction \ref{thm:graph-torus}. From definition \ref{def:cyclegraphs}, we define $\alpha_s$-, and $\beta_t$-cycle graphs, for $s=1,\cdots,{n_\alpha}$ and $t=1,\cdots,{n_\beta}$,
\begin{equation} \label{def:cyclegraphs-st}
     C^\pm_{\alpha_s} := \{V^\pm_{\alpha_s},E^\pm_{\alpha_s}\}, \; C^\pm_{\beta_t} := \{V^\pm_{\beta_t},E^\pm_{\beta_t}\}. 
\end{equation}
where $|C^+_{\alpha_s}| = |V^\pm_{\alpha_s}|=|E^\pm_{\alpha_s}| =\alpha_s$ and $|C^+_{\beta_t}| = |V^\pm_{\beta_t}|=|E^\pm_{\beta_t}| =\beta_t$.
With the graph Cartesian product in definition \ref{def:complementgraphproduct}, we get
\begin{equation} \label{eq:higher-dim-toroidal-graph}
\begin{split}
    &
    C^\pm_{\alpha_1} \Box \cdots \Box C^\pm_{\alpha_{n_\alpha}} , \; C^\pm_{\beta_1} \Box \cdots \Box C^\pm_{\beta_{n_\beta}}.\\
\end{split}
\end{equation}
Their set of vertices is defined by
\begin{equation}
    V^\pm_{\{\alpha\}}:= V^\pm_{\alpha_1}\times \cdots \times  V^\pm_{\alpha_{n_\alpha}}, \;V^\pm_{\{\beta\}}:= V^\pm_{\beta_1}\times \cdots \times  V^\pm_{\beta_{n_\beta}}.
\end{equation}
Their set of edges is defined by
\begin{equation} \label{eq:cycle-edges}
    \underset{s,s'=1,\cdots,{n_\alpha}}{\bigcup} \big( (V^\pm_{\alpha_s} \times E^\pm_{\alpha_{s'}}) \cup (E^\pm_{\alpha_s} \times V^\pm_{\alpha_{s'}})\big), \;\underset{t,t'=1,\cdots,{n_\beta}}{\bigcup} \big( (V^\pm_{\beta_t} \times E^\pm_{\beta_{t'}}) \cup (E^\pm_{\beta_t} \times V^\pm_{\beta_{t'}})\big)
\end{equation}
These graphs are the ${n_\alpha}$-dimensional toroidal graphs with $\prod_s \alpha_s$ vertices and $n_\alpha \prod_s \alpha_s$ edges and the ${n_\beta}$-dimensional toroidal graphs with $\prod_t \beta_t$ vertices and $n_\beta \prod_t \beta_t$ edges\cite{PARK202164,CHEN201433}, respectively. 

Then, we can get the $(n_\alpha + n_\beta)$-dimensional toroidal graph $\tilde{T}_\mR$ that matches with the one constructed by following construction \ref{thm:tile-general}, i.e.,
\begin{equation}
    \tilde{T}_\mR = \big( C^-_{\alpha_1} \Box \cdots \Box C^-_{\alpha_{n_\alpha}} \big) \Box \big( C^-_{\beta_1} \Box \cdots \Box C^-_{\beta_{n_\beta}}\big).
\end{equation}
In this paper, however, we restrict ourselves to a more straightforward case where the constructed graph decomposes into a disjoint union of $2$-torii\footnote{We comment on the general case in section \ref{sec:discussions}, and postpone it as a future exploration.}. This corresponds to picking a set of all the two-dimensional planes in $(n_\alpha+n_\beta)$-dimensional space, which do not cross each other. These two-dimensional planes are tiled only by, for instance, a rhombus (b) with the red edges in figure \ref{fig:21-tilingcell}.

\begin{figure}[t]
    \centering
    \includegraphics[width=15.7cm]{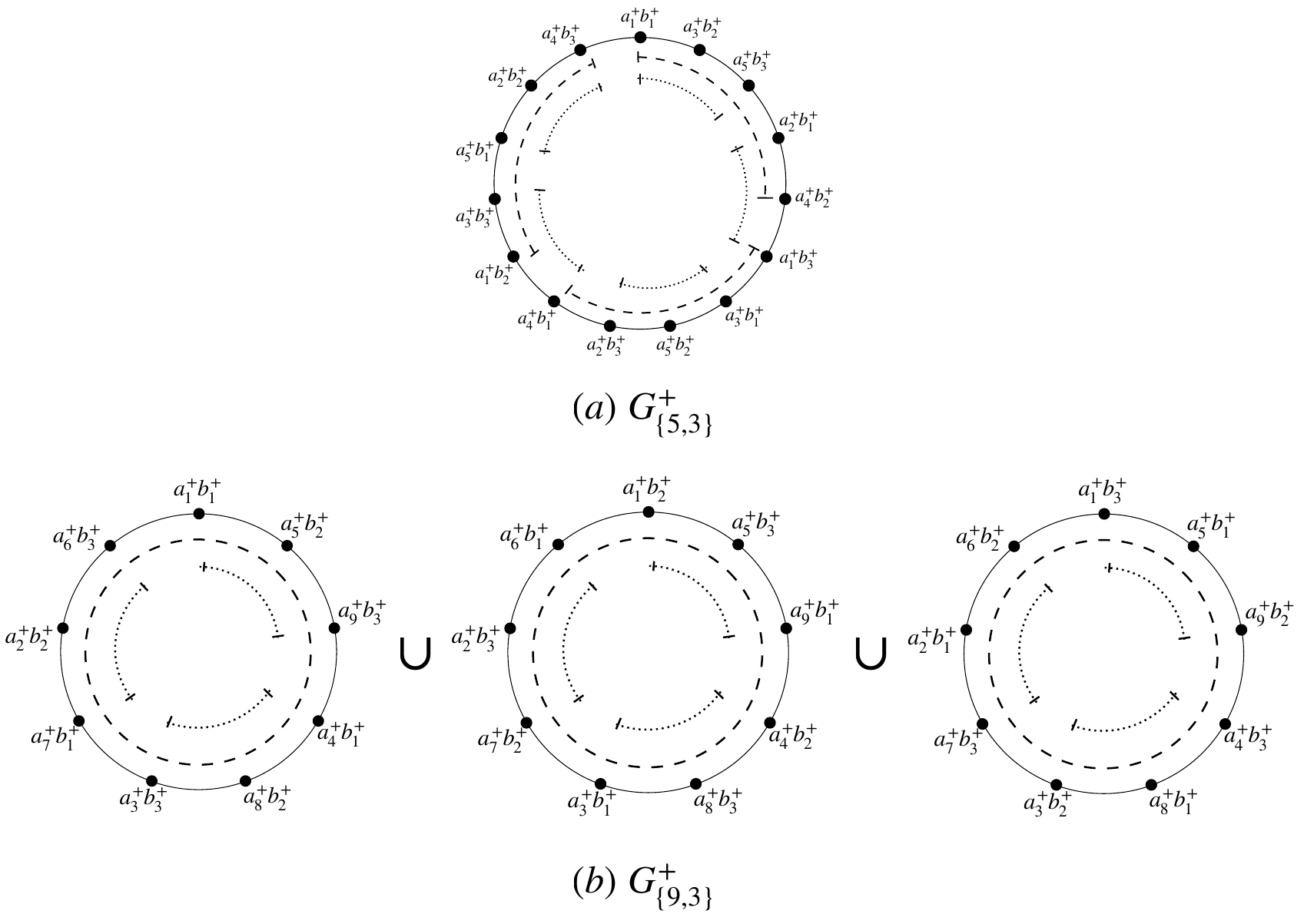}
    \caption{\small{Two examples of decomposition of a cycle graph constructed by the graph Cartesian product of two cycle graphs: (a) $G^+_{\{5,3\}}$ constructed by graph Cartesian product between $C_5^+$ of the set of regions $A^+=\{a^+_1,a_2^+,a^+_3,a^+_4,a^+_5\}$ and $C_3^+$ of the set of regions $B=\{b^+_1,b^+_2,b^+_3\}$ has $15/lcm(5,3)=1$ cycle graph with its graph length $|G^+_{\{5,3\}}|=lcm(5,3)=15$. The dashed line denotes a single period of $C_5^+$ and the dotted line represents a single period of $C_3^+$. (b) $G^+_{\{9,3\}}$ constructed by graph Cartesian product between $C_9^+$ of the set of regions $A^+=\{a^+_1,a^+_2,a^+_3,a^+_4,a^+_5,a^+_6,a^+_7,a^+_8,a^+_9\}$ and $C_3^+$ of the set of regions $B=\{b^+_1,b^+_2,b^+_3\}$ has $27/lcm(9,3)=3$ cycle graphs with its graph length $|G^+_{\{9,3\}}|=lcm(9,3)=9$. The dashed line denotes a single period of $C_9^+$ and the dotted line represents a single period of $C_3^+$. }}
    \label{fig:cyclegraphdecompositions}
\end{figure}


With our purpose, we resume our extension of construction \ref{thm:graph-torus} by considering a subset of the edges (\ref{eq:cycle-edges}) so that we can obtain disjoint graph unions of cycle graphs. 
We consider subgraphs $G^\pm_{\{\alpha\}}$ defined as 
\begin{equation}\label{eq:multicyclegraphsalpha}
    G^\pm_{\{\alpha\}} : = (V^\pm_{\{\alpha\}},E^\pm_{\{\alpha\}})
\end{equation}
where 
\begin{equation}\label{eq:multicyclegraphsalpha-edge}
\begin{split}
    E^\pm_{\{\alpha\}} := \big\{\big{\langle}(a^\pm_{i_1},\cdots,a^\pm_{i_{n_\alpha}}),(a^\pm_{i'_1},\cdots,a^\pm_{i'_{n_\alpha}})\big{\rangle}&|(i_1-i'_1,\cdots,i_{n_\alpha}-i'_{n_\alpha}) =( \frac{\alpha_1-1}{2}, \cdots , \frac{\alpha_a-1}{2})\\
    &\text{ or }(i_1-i'_1,\cdots,i_{n_\alpha}-i'_{n_\alpha}) =(- \frac{\alpha_1-1}{2}, \cdots ,- \frac{\alpha_a-1}{2})\big\},\\
\end{split}
\end{equation}
and
\begin{equation}
    \{\alpha\}:=\{\alpha_1,\cdots, \alpha_{n_\alpha}\}.
\end{equation}

In parallel, we define 
\begin{equation}\label{eq:multicyclegraphsbeta}
    G^\pm_{\{\beta\}} : = (V^\pm_{\{\beta\}},E^\pm_{\{\beta\}})
\end{equation}
where
\begin{equation}
    \{\beta\}:=\{\beta_1,\cdots, \beta_{n_\beta}\}.
\end{equation}

Now, we state that the subgraphs $G^\pm_{\{\alpha\}}, G^\pm_{\{\beta\}}$ decompose into disjoint cycle subgraphs. 
Moreover, $G^\pm_{\{\alpha\}}$ and $G^\pm_{\{\beta\}}$ are spanned by their cycle subgraphs respectively, see figure \ref{fig:cyclegraphdecompositions} for examples. 


\begin{lemma}[Decomposition with cycle graphs]\label{lem:decomposition-with-cyclegraphs}

    \begin{equation}
        G^\pm_{\{\alpha\}} = \underset{\kappa_{\{\alpha\}}=1,\cdots,m}{\bigcup} C^\pm_{\kappa_{\{\alpha\}}}, \; G^\pm_{\{\beta\}} =\underset{\kappa_{\{\beta\}}=1,\cdots,n}{\bigcup} C^\pm_{\kappa_{\{\beta\}}}
    \end{equation}
    where $C^\pm_{\kappa_{\{\alpha\}}}:=(V^\pm_{\kappa_{\{\alpha\}}},E^\pm_{\kappa_{\{\alpha\}}})$ are cycle graphs with $|V^\pm_{\kappa_{\{\alpha\}}}|=|E^\pm_{\kappa_{\{\alpha\}}}|=lcm(\alpha_1,\cdots,\alpha_{n_\alpha})$ for $\forall \kappa_{\{\alpha\}}$, and $m=\prod_{s}\alpha_s/lcm(\alpha_1,\cdots,\alpha_{n_\alpha}) $. Similarly, $C^\pm_{\kappa_{\{\beta\}}}:=(V^\pm_{\kappa_{\{\beta\}}},E^\pm_{\kappa_{\{\beta\}}})$ are cycle graphs with $|V^\pm_{\kappa_{\{\beta\}}}|=|E^\pm_{\kappa_{\{\beta\}}}|=lcm(\beta_1,\cdots,\beta_{n_\beta})$ for $\forall \kappa_{\{\beta\}}$, and $n=\prod_{t}\beta_t/lcm(\beta_1,\cdots,\beta_{n_\beta})$. Here, $lcm$ is the least common multiple. Note that $\bigcup$\footnote{We denote $\bigcup$ both union of sets and graph unions unless there is any confusion.}is a graph union, e.g., 
    \begin{equation}
        \underset{\kappa_{\{\alpha\}}=1,\cdots,m}{\bigcup} C^\pm_{\kappa_{\{\alpha\}}} = \Big(\underset{\kappa_{\{\alpha\}}=1,\cdots,m}{\bigcup}V^\pm_{\kappa_{\{\alpha\}}},\underset{\kappa_{\{\alpha\}}=1,\cdots,m}{\bigcup}E^\pm_{\kappa_{\{\alpha\}}}\Big).
    \end{equation}

    Furthermore, the disjoint unions of $C^\pm_{\kappa_{\{\alpha\}}}$ span $G^\pm_{\{\alpha\}}$. Similarly, the disjoint unions of $C^\pm_{\kappa_{\{\beta\}}}$ span $G^\pm_{\{\beta\}}$.  

\end{lemma}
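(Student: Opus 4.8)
The plan is to recognize $G^\pm_{\{\alpha\}}$ as a Cayley graph of a finite abelian group and to read off the component structure from elementary group theory. First I would identify each vertex $(a^\pm_{i_1},\cdots,a^\pm_{i_{n_\alpha}})$ with the group element $(i_1,\cdots,i_{n_\alpha})\in \mbZ_{\alpha_1}\times\cdots\times\mbZ_{\alpha_{n_\alpha}}=:\Gamma$, using the $\mathrm{mod}\ \alpha_s$ identification of (\ref{eq:mod}). Writing $\mathbf{s}:=\big(\tfrac{\alpha_1-1}{2},\cdots,\tfrac{\alpha_{n_\alpha}-1}{2}\big)\in\Gamma$, the edge condition (\ref{eq:multicyclegraphsalpha-edge}) says exactly that two vertices are adjacent iff they differ by $+\mathbf{s}$ or $-\mathbf{s}$. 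Hence $G^\pm_{\{\alpha\}}$ is the Cayley graph $\mathrm{Cay}(\Gamma,\{\mathbf{s},-\mathbf{s}\})$: every vertex $v$ is adjacent precisely to $v+\mathbf{s}$ and $v-\mathbf{s}$, which are distinct as soon as $\mathbf{s}\neq-\mathbf{s}$.

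Next, the connected component of a vertex $v$ is its orbit under repeated translation by $\mathbf{s}$, i.e.\ the coset $v+H$ where $H:=\langle\mathbf{s}\rangle$ is the cyclic subgroup of $\Gamma$ generated by $\mathbf{s}$. This reduces the whole lemma to computing $\mathrm{ord}(\mathbf{s})=|H|$: the cosets of $H$ partition $\Gamma$ into $[\Gamma:H]$ blocks, each of size $|H|$, and the induced subgraph on each block is connected and (generically) $2$-regular, hence a cycle.

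The key computation uses crucially that every $\alpha_s$ is odd. Writing $\alpha_s=2m_s+1$ gives $\tfrac{\alpha_s-1}{2}=m_s$ with $2m_s\equiv-1\pmod{\alpha_s}$, so $\gcd\!\big(\tfrac{\alpha_s-1}{2},\alpha_s\big)=1$ and $\tfrac{\alpha_s-1}{2}$ has order exactly $\alpha_s$ in $\mbZ_{\alpha_s}$. Since the order of an element of a direct product is the lcm of the orders of its coordinates, $\mathrm{ord}(\mathbf{s})=lcm(\alpha_1,\cdots,\alpha_{n_\alpha})=:L$. Lagrange's theorem then gives the number of components $m=[\Gamma:H]=\prod_s\alpha_s/L$, and each $C^\pm_{\kappa_{\{\alpha\}}}$ has $|V^\pm_{\kappa_{\{\alpha\}}}|=|E^\pm_{\kappa_{\{\alpha\}}}|=L$. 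The spanning claim is then immediate, since the cosets of $H$ partition $V^\pm_{\{\alpha\}}$ and every edge joins two vertices of one coset, so the cycles $C^\pm_{\kappa_{\{\alpha\}}}$ exhaust all vertices and edges of $G^\pm_{\{\alpha\}}$. The statement for $G^\pm_{\{\beta\}}$ follows verbatim with $\beta_t$ replacing $\alpha_s$.

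The argument is routine once the Cayley-graph reformulation is in place, and the only genuinely load-bearing step is the order computation. The one point demanding care is verifying that each block really is a cycle graph rather than a degenerate configuration with a doubled edge; this is precisely where oddness is indispensable, since $L$ is a least common multiple of odd numbers and is therefore itself odd, so $L\neq 2$ and $v+\mathbf{s}\neq v-\mathbf{s}$ hold whenever the $\alpha_s$ are not all equal to $1$ (the $\alpha_s=1$ coordinates being exactly the degenerate cases flagged by the Kronecker deltas in Lemma \ref{lem:cardinalities-general}).
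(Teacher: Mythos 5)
Your proof is correct, and it takes a genuinely cleaner route than the paper's. The paper proves the lemma by direct orbit-tracing: it restricts to $n_\alpha=2$, assumes $\alpha_1>\alpha_2$, and splits into the cases $\alpha_1 \bmod \alpha_2 = 0$ (getting $\alpha_2$ cycles of length $\alpha_1$) and $\alpha_1 \bmod \alpha_2 \neq 0$ (claiming a single cycle of length $\alpha_1\alpha_2$), then asserts the general answer in terms of the lcm and relegates general $n_\alpha$ to a footnote as ``trivial.'' Your Cayley-graph formulation, $G^\pm_{\{\alpha\}} \simeq \mathrm{Cay}\big(\mbZ_{\alpha_1}\times\cdots\times\mbZ_{\alpha_{n_\alpha}},\{\mathbf{s},-\mathbf{s}\}\big)$ with $\mathbf{s}=\big(\tfrac{\alpha_1-1}{2},\cdots,\tfrac{\alpha_{n_\alpha}-1}{2}\big)$, packages the same underlying mechanism (components are orbits of translation by $\mathbf{s}$) but replaces the case analysis with two standard facts: the order of a tuple is the lcm of the coordinate orders, and cosets of $\langle\mathbf{s}\rangle$ partition the vertex set with $[\Gamma:\langle\mathbf{s}\rangle]$ blocks. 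This buys three things the paper's write-up lacks. First, uniformity in $n_\alpha$ — no appeal to a ``trivial extension.'' Second, correctness in the intermediate cases: the paper's claim that $\alpha_1\bmod\alpha_2\neq 0$ forces a single cycle of length $\alpha_1\alpha_2$ is false when $\gcd(\alpha_1,\alpha_2)>1$ (e.g.\ $\alpha_1=9$, $\alpha_2=15$ gives cycles of length $lcm(9,15)=45$, not $135$); your order computation handles this automatically, and only the paper's final lcm summary — which is what the lemma actually states — survives. Third, you make explicit where oddness of the $\alpha_s$ enters ($\gcd\big(\tfrac{\alpha_s-1}{2},\alpha_s\big)=1$, hence $\mathrm{ord}(\mathbf{s})=lcm(\alpha_1,\cdots,\alpha_{n_\alpha})$, and $L$ odd rules out doubled edges), and you correctly isolate the only degenerate configuration, $2\mathbf{s}=0$, which occurs iff all $\alpha_s=1$ — a point the paper glosses over entirely.
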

\begin{proof}

    From (\ref{eq:multicyclegraphsalpha}) and (\ref{eq:multicyclegraphsalpha-edge}), each vertex has degree $2$. Recall from (\ref{eq:mod}) that $a^{\pm}_{i_s} \equiv a^{\pm}_{i_s+\alpha_s}$\footnote{Here, we use the simplified notation, i.e., $a^{\pm}_{(i_s,s)} \equiv a^{\pm}_{(i_s+\alpha_s,s)} \to a^{\pm}_{i_s} \equiv a^{\pm}_{i_s+\alpha_s}$. } for $\forall i_s$.
    
    When $n_\alpha=2$, $G^{\pm}_{\{\alpha\}}$ is a subgraph of $C^{\pm}_{\alpha_1} \Box C^{\pm}_{\alpha_2}$. 
    Each vertex of $G^{\pm}_{\{\alpha\}}$ is denoted as $(a^\pm_{i_1},a^\pm_{i_{2}})$ for $\forall i_{1}= 1, \cdots, \alpha_1$ and $\forall i_{2} = 1, \cdots, \alpha_2$. Suppose $\alpha_1 > \alpha_2$. For fixed $i_{1}$ and $i_{2}$, we obtain the cycle graph whose edges connect the vertices shifted by, for instance, $(\frac{\alpha_1-1}{2},\frac{\alpha_2-1}{2})$, until
    \begin{equation}
        (a^\pm_{i_{1}+\alpha_1\alpha_2},a^\pm_{i_{2}+\alpha_1\alpha_2}) \equiv(a^\pm_{i_{1}},a^\pm_{i_{2}}).
    \end{equation}
    Thus, the graph length is at most the modulus $\alpha_1\alpha_2$, i.e., $|G^{\pm}_{\{\alpha\}}| =\alpha_1\alpha_2$.
    
    If $\alpha_1 \text{ mod } \alpha_2 = 0$, or $\alpha_1 = \xi \alpha_2$ for $\xi \in \mbN$, then, for a fixed $i_{1}$ and $i_{2}$, the smallest length of cycle graph is $\alpha_1$ because
    \begin{equation}
        (a^\pm_{i_{1}+\alpha_1},a^\pm_{i_{2}+ \xi \alpha_2}) =(a^\pm_{i_{1}+\alpha_1},a^\pm_{i_{2}+\alpha_1}) \equiv (a^\pm_{i_{1}},a^\pm_{i_{2}}).
    \end{equation}
    Moreover, for a fixed $i_{1}$, there are $\alpha_2$ distinct initial vertices $(i_{1},i_{2})$ because there are only $\xi$ distinct vertices,
    \begin{equation}
        (a^\pm_{i_{1}},a^\pm_{i_{2}}),(a^\pm_{i_{1}},a^\pm_{i_{2}}),\cdots, (a^\pm_{i_{1}+(\xi-1)\alpha_2}, a^\pm_{i_{2}}),
    \end{equation}
    in a single cycle graph for fixed $i_{2}$.
    Hence, there are $m= \alpha_2$ cycle graphs $C^{\pm}_{\kappa_{\{\alpha\}}}=(V^\pm_{\kappa_{\{\alpha\}}},E^\pm_{\kappa_{\{\alpha\}}})$ with length $|C^{\pm}_{\kappa_{\{\alpha\}}}|=\alpha_1$. 
    
    This implies that 
    \begin{equation}
        \sum_{\kappa_{\{\alpha\}}}|V^\pm_{\kappa_{\{\alpha\}}}|  = |V^\pm_{\{\alpha\}}| =\alpha_1\alpha_2.
    \end{equation}
    Therefore, the disjoint union of $C^{\pm}_{\kappa_{\{\alpha\}}}$ spans $G^{\pm}_{\{\alpha\}}$.

    If $\alpha_1 \text{ mod } \alpha_2 \neq 0 $, the smallest length of the cycle graph is $\alpha_1\alpha_2$. There is a $m=1$ cycle graph $C^{\pm}_1=(V^\pm_{\kappa_{\{\alpha\}}=1},E^\pm_{\kappa_{\{\alpha\}}=1})$ which satisfies
    \begin{equation}
        |V^\pm_1| =  |V^\pm_{\{\alpha\}}|=\alpha_1\alpha_2.
    \end{equation}
    In this case, $(a^\pm_{i_1},a^\pm_{i_2})$ for $\forall i_{1}$ and $\forall i_{2}$ are the vertices of a single cycle graph. $C^{\pm}_1$ trivially spans $G^{\pm}_{\{\alpha\}}$. 
    
    In short, the length of each disjoint cycle and the total number of disjoint cycles are determined by $lcm(\alpha_1,\alpha_2)$ and $\alpha_1\alpha_2/lcm(\alpha_1,\alpha_2)$, respectively. In general\footnote{We omit the general case for any $n_\alpha$ as the extension is trivial.}, the length of each disjoint cycle is given by $lcm(\alpha_1,\cdots,\alpha_{n_\alpha})$. The total number of disjoint cycles is $(\prod_s \alpha_s)/lcm(\alpha_1,\cdots,\alpha_{n_\alpha})$.


    
\end{proof}
With the graph Cartesian product in definition \ref{def:complementgraphproduct}, we construct
\begin{equation}
    G_\mL = G^+_{\{\alpha\}} \Box G^-_{\{\beta\}}, \; G_\mR = G^-_{\{\alpha\}} \Box G^-_{\{\beta\}}
\end{equation}
From lemma \ref{lem:decomposition-with-cyclegraphs}, 
\begin{equation}
    G_\mL = \underset{\kappa_{\{\alpha\}}=1,\cdots,m,\kappa_{\{\beta\}}=1,\cdots,n}{\bigcup}  C^+_{\kappa_{\{\alpha\}}} \Box C^-_{\kappa_{\{\beta\}}}.
\end{equation}
$C^+_{\kappa_{\{\alpha\}}} \Box C^-_{\kappa_{\{\beta\}}}$ for each pair $(\kappa_{\{\alpha\}},\kappa_{\{\beta\}})$ is embeddable on a $2$-torus as in lemma \ref{lem:embed-torus}, i.e.,
\begin{equation}
     C^+_{\kappa_{\{\alpha\}}} \Box C^-_{\kappa_{\{\beta\}}} \to T_{\mL_\tau} ,\;  C^+_{\kappa_{\{\alpha\}}} \Box C^-_{\kappa_{\{\beta\}}} \to T_{\mR_\tau}
\end{equation}
where $\tau := (\kappa_{\{\alpha\}},\kappa_{\{\beta\}})$. Thus, we write
\begin{equation}\label{eq:leftrighttorus-decomposition}
    T_\mL = \underset{\tau}{\bigcup} \;T_{\mL_\tau} ,\; T_\mR = \underset{\tau}{\bigcup} \;T_{\mR_\tau}
\end{equation}

There exists the dual graph for each $\tau$, i.e.
\begin{equation}
    T_{\mR_\tau} = \hat{T}_{\mL_\tau}.
\end{equation}
Therefore,
\begin{equation} \label{eq:decompositions-torii}
    T_{\mR} = \hat{T}_{\mL} = \underset{\tau}{\bigcup} \;\hat{T}_{\mL_\tau} = \underset{\tau}{\bigcup} \;T_{\mR_\tau}   .
\end{equation}

Based on the constructions, we summarize the recipe of graphical representation of $(\alpha_1,\cdots,\alpha_{n_\alpha};\beta_1,\cdots,\beta_{n_\beta})$-conjectures as follows.
\begin{construction}[Geometrizing the generalized toric conjectures\footnote{
Note that the graphs constructed from Construction \ref{thm:tile-general} and \ref{thm:graph-general} are different in the sense that the latter ones are explicitly constructed from unions of $2$-torii. The unit cells in Construction \ref{thm:tile-general} do not necessarily admit such decomposition, see figure \ref{fig:21-tilingcell}.\label{ft:construction4-2}}: Graph theoretical method] \label{thm:graph-general}\

    Consider $(\alpha_1,\cdots,\alpha_{n_\alpha};\beta_1,\cdots,\beta_{n_\beta})$-conjectures.
    \begin{enumerate}
        \item Construct the graphs, $G^+_{\{\alpha\}}$ and $G^-_{\{\beta\}}$, and find their decompositions of cycle subgraphs. 
        \item Construct $G_\mL = G^+_{\{\alpha\}} \Box G^-_{\{\beta\}}$. 
        \item Obtain $T_\mL = \underset{\tau}{\cup} \; T_{\mL_\tau}$ by embedding $G_\mL$ on disjoint torii.
        \item $T_\mR = \hat{T}_\mL=\underset{\tau}{\cup} \; \hat{T}_{\mL_\tau} =\underset{\tau}{\cup} \; T_{\mR_\tau} $ 
    \end{enumerate}

    Note that the graph explicitly used for the proof by a contraction map in this paper corresponds to $T_\mR$.

\end{construction}

\subsection{Proof methods}

For the generalized toric conjectures, we denote the bitstrings of the LHS and RHS as $X$ and $Y$, respectively. Similar to the case of $(\alpha,\beta)$-inequalities, we define the occurrence bitstrings of, for instance, $a_i$ as
\begin{equation}
(X^{a_i})_u=
\begin{cases}
1 & \text{if $a_i \subseteq L_u$}\\
0 & \text{otherwise}\\
\end{cases}, \;
(Y^{a_i})_v=
\begin{cases}
1 & \text{if $a_i \subseteq R_v$ }\\
0 & \text{otherwise}\\
\end{cases}.
\end{equation}


\begin{figure}[t]
    \centering
    \includegraphics[width=1.0\linewidth]{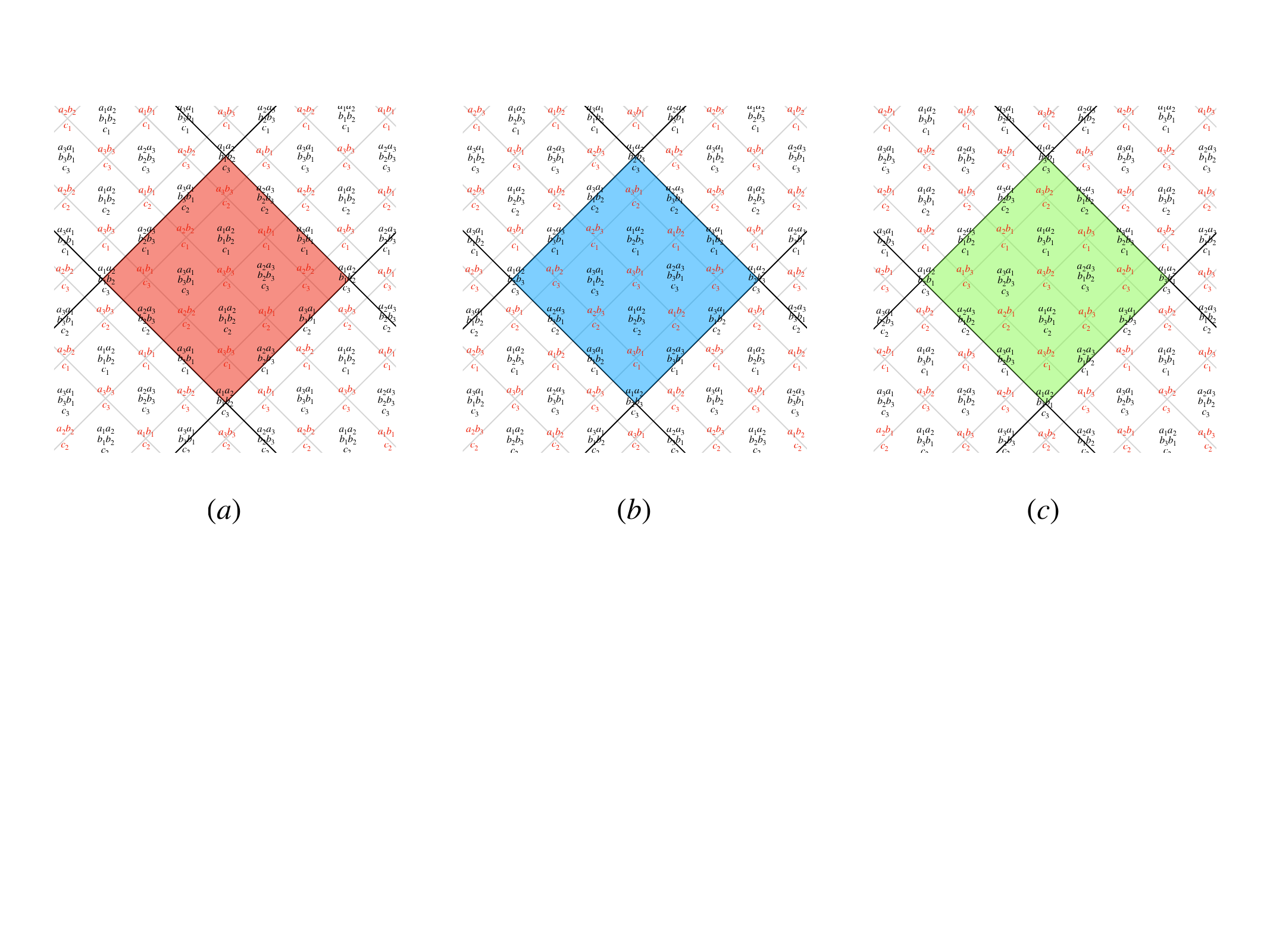}
    \caption{\small{(a), (b), and (c) represent the fundamental domain of disjoint $2$-torii in (\ref{eq:color-conjecture}). See (c) in figure \ref{fig:green}. Each color in the figures matches the colors in (\ref{eq:color-conjecture}).}}
    \label{fig:333decomposition}
\end{figure}

\begin{figure}[t]
    \centering
    \includegraphics[width=0.5\linewidth]{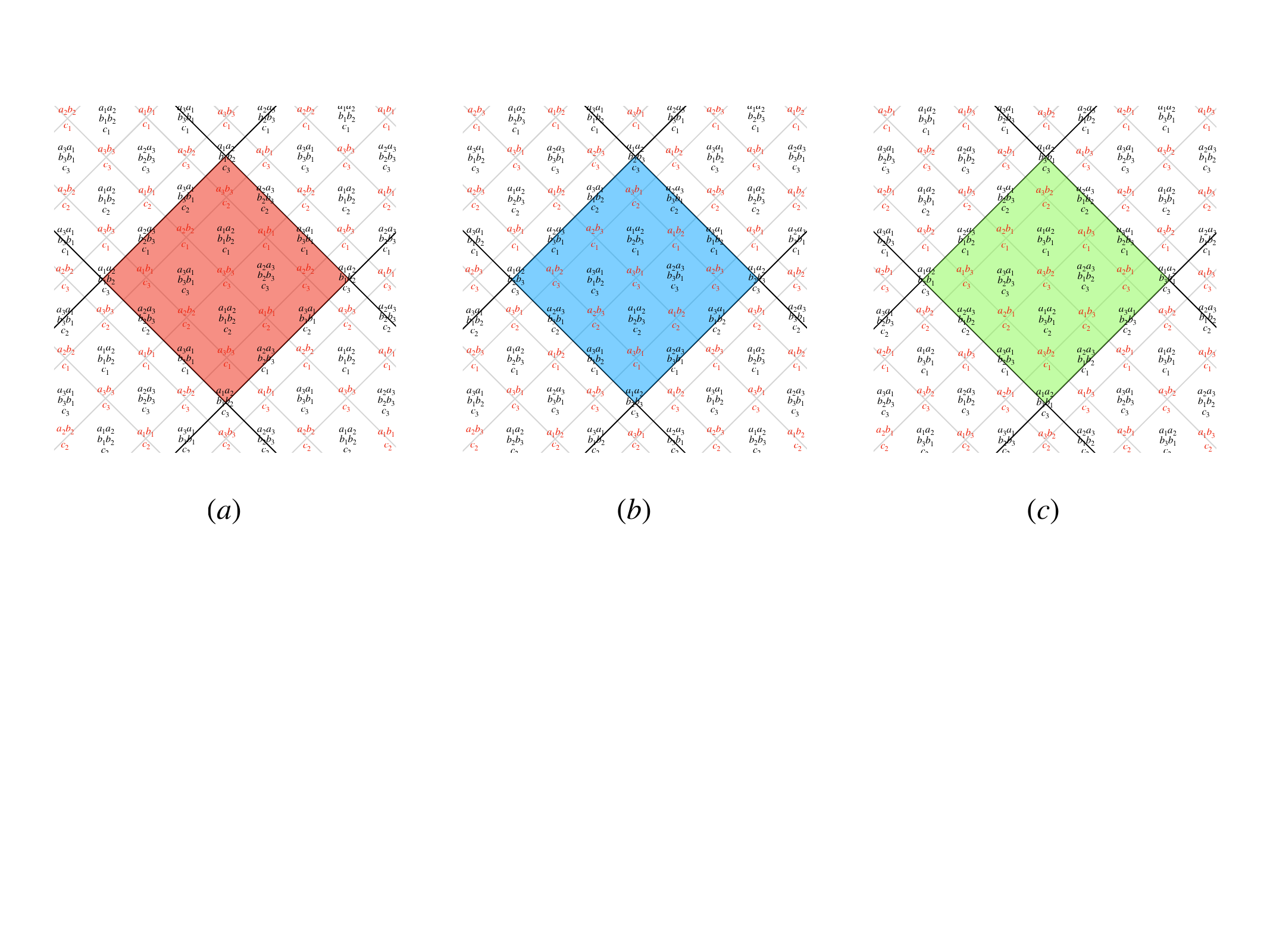}
    \caption{\small{(a), (b), and (c) represent the fundamental domain of disjoint $2$-torii in (\ref{eq:color-conjecture}). See (a) and (b) in figure \ref{fig:333decomposition}. Each color in the figures matches the colors in (\ref{eq:color-conjecture}).}}
    \label{fig:green}
\end{figure}

The decomposition $T_\mR =  \underset{\tau}{\bigcup} \;T_{\mR_\tau}$ in (\ref{eq:decompositions-torii}) implies that the geometric terms in 
the generalized toric conjectures can be decomposed into the group of terms corresponding to $2$-torii. For example, $(3,3;3)$-conjecture has three disjoint $2$-torii in the geometric part, i.e.\footnote{(Color online) the two terms in (\ref{eq:color-conjecture}) with the same color correspond to a single $2$-torus.}, see figure \ref{fig:333decomposition},
\begin{equation}\label{eq:color-conjecture}
\begin{matrix}
    \mathcolor{red}{\sum_{i=1}^{3}\sum_{j=1}^{3} S_{a_{i}^+b_{i}^+c^-_{j} }}\\
    +\mathcolor{blue}{\sum_{i=1}^{3}\sum_{j=1}^{3} S_{a_{i}^+b_{i+1}^+c^-_{j} }}\\
    +\mathcolor{teal}{\sum_{i=1}^{3}\sum_{j=1}^{3} S_{a_{i}^+b_{i+2}^+c^-_{j} }}\\
\end{matrix}
\geq
\begin{matrix}
    \mathcolor{red}{\sum_{i=1}^{3}\sum_{j=1}^{3}  S_{a_{i}^-b_{i}^-c^-_{j}}} \mathcolor{white}{+\text{``non-geometric terms"}}\\
    +\mathcolor{blue}{\sum_{i=1}^{3} \sum_{j=1}^{3}S_{a_{i}^-b_{i+1}^-c^-_{j}}}+\text{``non-geometric terms"}\\
    +\mathcolor{teal}{\sum_{i=1}^{3}\sum_{j=1}^{3} S_{a_{i}^-b_{i+2}^-c^-_{j}}} \mathcolor{white}{+\text{``non-geometric terms"}}\\
\end{matrix}.
\end{equation}

As a result, the bitstrings on the LHS, $X$, decomposes into the bitstrings $x$ on each torus, i.e.,
\begin{equation}\label{eq:Xdecomposition}
    X=x_{(1,1)}\times \cdots \times x_{\tau=(\kappa_{\{\alpha\}},\kappa_{\{\beta\}})} \times \cdots \times x_{(m,n)}.
\end{equation}
This simplifies the proof of the generalized toric conjectures because the geometric assignment of bitstrings $X$ and $Y$ reduces to the one of $x_{\tau}$\footnote{When $G_\mL$ and $G_\mR$ are constructed with the graphs in (\ref{eq:higher-dim-toroidal-graph}) instead of the subgraphs $G^\pm_{\{\alpha\}}$ and $G^\pm_{\{\beta\}}$, the geometry of bitstrings becomes a set of higher dimensional objects rather than the knots we had in section \ref{sec:review-toric}. We briefly discuss this point in section \ref{sec:discussions}.}.


Let $F$ be a candidate geometric contraction map given by
\begin{equation}
    F(X):= \tilde{F}(X) \times \tilde{F}_{ng}
\end{equation}
where $\times$ is a cartesian product. $\tilde{F}(X) \in \{0,1\}^{l}$ has the information of bitstrings on the set of vertices $V_\mR=\mR$, and $\tilde{F}_{ng}\in \{0,1\}^{r-l}$ contains the information of non-geometric terms. 
One should note that $\tilde{F}(X)$ does not necessarily have the following decomposition,
\begin{equation}\label{eq:decomposition-F}
    \tilde{F}(x_{(1,1)}\times \cdots \times x_{\tau} \times  \cdots \times x_{(m,n)}) = \tilde{f} (x_{(1,1)})\times \cdots \times  \tilde{f}(x_{\tau}) \times \cdots \times \tilde{f}(x_{(m,n)})
\end{equation}
where $\tilde{f}$ is a part of the geometric contraction map introduced in (\ref{eq:contractionmap-toric}). The local action of $\tilde{f}$ on $x_\tau$ is defined by Rule \ref{rul:toric}. The decomposition does not happen when the bit-flip of $(x)_u$ on $2$-torus results in double bit-flips on two vertices\footnote{In such a case, we say that there is an \textit{interaction} among the $2$-torii. We comment on it in section \ref{sec:discussions}.}. One vertex is on the torus where $(x)_u$ lives. The other vertex is on the different torus.

This paper focuses on the case when we have the decomposition,
\begin{equation}
    F(X)= \tilde{f}(x_{(1,1)})\times \cdots \times \tilde{f}(x_{\tau}) \times \cdots \times \tilde{f}(x_{(m,n)} )\times \tilde{F}_{ng}.
\end{equation}

Here, we give the rules that define the geometric contraction map for the generalized toric conjectures.
\begin{rules}\label{rul:general-toric}\

\noindent
\underline{1. Rules on geometric bitstrings $\tilde{F}(X)$, vertices}

    From the decomposition, 
    \begin{equation}
        \tilde{F}(x_{(1,1)}\times \cdots \times x_{\tau} \times \cdots \times x_{(m,n)}) = \tilde{f} (x_{(1,1)})\times \cdots \times \tilde{f}(x_{\tau}) \times  \cdots \times \tilde{f}(x_{(m,n)}),
    \end{equation}
    we apply the rule \ref{rule1} to $\tilde{f}(x_\tau)$ for every $x_\tau$.

\noindent
\underline{2. Rules on the non-geometric bitstrings $\tilde{F}_{ng}$}

    The non-geometric bitstrings $\tilde{F}_{ng}$ are constrained\footnote{In the case of a single non-geometric term, the non-geometric bitstring is uniquely determined by (\ref{eq:general-toricrule2}). However, for multiple non-geometric terms, there could be some residual degeneracy after the condition (\ref{eq:general-toricrule2}). In this case, one needs to check if it is compatible with theorem \ref{thm:proofbycontraction}. We make detailed comments in section \ref{sec:examples}. } by
    \begin{equation} \label{eq:general-toricrule2}
        \|X\|_1=\|F(X)\|_1\ \text{mod}\ 2.
    \end{equation}
    

\end{rules}

We briefly discuss a few examples in section \ref{sec:examples}. 
The non-geometric terms constrain the global entanglement structure by adding the entanglement entropy between the sets of regions, $A_1,\cdots,A_{n_\alpha}$ and $B_1,\cdots,B_{n_\alpha}$. For example, the last term of $(\alpha;\beta)$-inequalities is $S_{A}$. 

The balance and superbalance conditions partially or entirely determine the non-geometric terms of generalized toric conjectures. Hence, we discuss the balance and superbalance conjectures in the following subsection before we move on to the examples.

\subsection{Balance and Superbalance}\label{sec:balance-superbalance}

We begin this section by briefly reviewing the definitions of balance and superbalance of HEIs.
\begin{definition}[Balance] \cite{Bao:2015bfa,HEarrangement2019Hubeyetal}
    A HEI (\ref{eq:genentineq}) is balanced if all singleton characters appear an equal number of times on both sides.
\end{definition}

\begin{definition}[Superbalance] \cite{HEarrangement2019Hubeyetal,HErepackaged2019He,superbalance2020He,Hernandez-Cuenca:2023iqh}
    A HEI (\ref{eq:genentineq}) is superbalanced if the inequality, under all the permutations of labels of regions, including purification is balanced.
\end{definition}

All $(\alpha;\beta)$-inequalities can be expressed by conditional entropies in the form
\begin{equation} \label{eq:ce}
    \sum_{i=1}^{\alpha}\sum_{j=1}^{\beta} S(a_{i+\frac{\alpha-1}{2}}|a^-_{i} b^-_j )\geq  S_{A}.
\end{equation}
(\ref{eq:ce}) is neither balanced nor superbalanced when the non-geometric term, $S_A$, is absent. It is by inclusion of the non-geometric term $S_A$ on the RHS of (\ref{eq:ce}) that makes the inequalities balanced and superbalanced. In this section, we will modify the non-geometric terms in the RHS of (\ref{eq:ce}) to conjecture balanced and superbalanced HEI candidates. In particular, we will modify the non-geometric terms such that they are the sum of entanglement entropies of all possible non-redundant combinations among the sets $A_1,\cdots, A_a$ and $B_1,\cdots, B_b$. Any entanglement entropy related to another by the purification symmetry does not appear in the non-geometric terms of the conjectures.

In general, we write $(\alpha_1,\cdots,\alpha_{n_\alpha};\beta_1,\cdots,\beta_{n_\beta})$-conjectures as
\begin{equation}
\begin{split}
    \sum_{i_1,\cdots, i_{n_\alpha}=1}^{\alpha_1,\cdots,\alpha_{n_\alpha}} &\sum_{j_1,\cdots, j_b=1}^{\beta_1,\cdots,\beta_{n_\beta}}
     S(a_{i_1+\frac{\alpha-1}{2}}\cdots a_{i_{n_\alpha}+\frac{\alpha-1}{2}} |a^-_{i_1}\cdots a^-_{i_{n_\alpha}} b^-_{j_1}\cdots b^-_{j_{n_\alpha}})\\
    & \geq \sum_{s=1}^{n_\alpha}\chi^{A_s} S_{A_s} + \sum_{t=1}^b\chi^{B_t} S_{B_t} \\
    & + \sum_{s,s'=1}^{n_\alpha}\chi^{A_{s}A_{s'}} S_{A_{s}A_{s'}} + \sum_{t,t'=1}^{n_\beta}\chi^{B_{t}B_{t'}} S_{B_tB_{t'}} + \sum_{s=1}^{n_\alpha}\sum_{t=1}^{n_\beta}\chi^{A_{s}B_{t}} S_{A_sB_t}\\
    &+\cdots
\end{split}
\end{equation}
where $\chi$'s are the integer coefficients. Then, we fix the parameters $\chi$ such that the conjectures are balanced or superbalanced.



    
        
        
    


The total number $\sigma_p$ of non-redundant parameters in the non-geometric terms is $\sigma_p=2^{{n_\alpha}+{n_\beta}-1}-1$. The balance conditions fix $\sigma_b =n_\alpha+n_\beta-1$ number of parameters. The superbalance conditions determine $\sigma_{sb}=\binom{n_\alpha+n_\beta}{2}$\footnote{ Note that $\sigma_{sb}= \binom{n_\alpha+n_\beta-1}{2}+ \sigma_b$, which contains $\sigma_b$, because the superbalance conditions imply the balance conditions.} number of parameters.

For example, the potential non-geometric terms of $(\alpha;\beta)$-inequalities are $S_A,S_B,S_{AB}$. Recall that $S_{AB}=0$ and $S_A=S_B$. Without loss of generality, we can write
\begin{equation} \label{eq:toric-parameter}
    \sum_{i=1}^{\alpha}\sum_{j=1}^{\beta} S(a_{i+\frac{\alpha-1}{2}}|a^-_{i} b^-_j )\geq  \chi^A S_{A}
\end{equation}
with the coefficient $\chi^A$. In addition, we have $n_\alpha + n_\beta =2$, $\sigma_p=\sigma_b=\sigma_{sb}=1$. This implies that the parameter $\chi^A$ in (\ref{eq:toric-parameter}) is fully fixed by both balance and superbalance conditions, i.e., $\chi^A=1$.

The $(\alpha_1,\alpha_2;\beta)$-conjectures for ${n_\alpha}+{n_\beta}=3$ can be expressed as
\begin{equation}\label{eq:2p1m}
\sum_{i,j=1}^{\alpha_1,\alpha_2}\sum_{j=1}^{\beta} S(a_{i+\frac{\alpha_1-1}{2}}b_{j+\frac{\alpha_2-1}{2}} |a_i^-b_j^-c_k^-)\geq \chi^A S_{A} +\chi^B S_{B}+ \chi^C S_{C}.
\end{equation}
where $\chi^A,\chi^B,\chi^C\in \mbZ$. We have $\sigma_b=2$, $\sigma_{sb}=3$, and $\sigma_p=3$. Hence, all the parameters $\chi^A,\chi^B,\chi^C$ of the non-geometric terms of superbalanced conjectures are fixed, whereas those of balanced conjectures have a single free parameter since $\sigma_p-\sigma_b=1$. In general, for ${n_\alpha}+{n_\beta}>3$, the non-geometric terms of superbalanced conjectures have $(\sigma_p-\sigma_{sb})$ numbers of free parameters.



Below, we denote the conditional entropies on the LHS as `C.E.' just for brevity. By the balance conditions,
\begin{enumerate}
    \item[i)] when the purifier $O \in A$, 
    \begin{equation}
        \chi^A + \chi^B = 0,\;\chi^A + \chi^C = \alpha_2,
    \end{equation}
    thence \begin{equation}\label{eq:balancepurifierinA}
    C.E. \geq   (\alpha_2-\chi^C) S_{A} -(\alpha_2-\chi^C) S_{B}+ \chi^C S_{C}
    \end{equation}

    \item[ii)] when $O \in B$, 
    \begin{equation}
        \chi^A + \chi^B = 0,\; \chi^B+\chi^C = \alpha_1,
    \end{equation}
    thence 
    \begin{equation}
        C.E. \geq  -  (\alpha_1-\chi^C) S_{A} +(\alpha_1-\chi^C) S_{B}+ \chi^C S_{C}.
    \end{equation}

    \item[iii)] when $O \in C$, 
    \begin{equation}
        \chi^A + \chi^C = \alpha_2,\; \chi^B+\chi^C = \alpha_1
    \end{equation}
    thence 
    \begin{equation}
        C.E. \geq   (\alpha_2-\chi^C) S_{A} +(\alpha_1-\chi^C) S_{B}+ \chi^C S_{C}.
    \end{equation}
    
\end{enumerate}
Note that the non-geometric terms of each conjecture contain $\chi^C$ as a free parameter because $\sigma_p-\sigma_s=1$. 

For $(\alpha_1,\alpha_2;\beta)$-conjectures to be superbalanced, the parameters need to satisfy essentially
\begin{equation}
    \chi^A + \chi^B = 0,\; \chi^B+\chi^C = \alpha_1, \; \chi^A + \chi^C = \alpha_2.
\end{equation}
Thus, we have
\begin{equation} \label{eq:2p1msuperbalance}
   C.E. \geq -  \frac{\alpha_1-\alpha_2}{2} S_{A} +\frac{\alpha_1-\alpha_2}{2} S_{B}+ \frac{\alpha_1+\alpha_2}{2} S_{C}.\\
\end{equation}










\section{Examples of Generalized Toric Inequalities} \label{sec:examples}

This section presents examples of valid inequalities found from the conjectures. There are redundant inequalities implied only by the toric inequalities and the ones that are not. We first describe that the former inequalities can be expressed as a sum of $(\alpha;\beta)$ inequalities, and thus they are redundant HEIs. Then, we present one of the latter inequalities as an example, which is balanced. We leave the further search of the facets of HEC from the conjectures\footnote{One can also perform a tightening procedure described in \cite{Hernandez-Cuenca:2023iqh} by adding conditional tri-partite information quantities starting from our inequalities. We leave this as a future exercise.} for future work as discussed in section \ref{sec:discussions}. 


We apply the proof methods discussed in section \ref{sec:proof-methods}. However, we need a hybrid contraction map for the latter example that geometrically determines the geometric bitstrings $\tilde{F}(X')$ and numerically determines the non-geometric bitstrings $\tilde{F}_{ng}$\footnote{The completion of such partially completed contraction maps may be efficiently achieved by the rules described in \cite{Bao:2024contraction_map}.}.



\subsection{Redundant inequalities implied only by the toric inequalities}
We study the following subclass of conjectures and prove that they are redundant HEIs, which can be expressed as a sum of toric inequalities,
\begin{equation}\label{eq:inequality-gen-toric}
\begin{split}
    \sum_{i_1,\cdots, i_{n_\alpha}=1}^{\alpha,\cdots,\alpha} \sum_{j_1,\cdots, j_{n_\beta}=1}^{\beta,\cdots,\beta} S_{a^+_{i_1}\cdots a^+_{i_{n_\alpha}} b^-_{j_1}\cdots b^-_{j_{n_\beta}}}  \geq \sum_{i_1,\cdots, i_{n_\alpha}=1}^{\alpha,\cdots,\alpha} &\sum_{j_1,\cdots, j_{n_\beta}=1}^{\beta,\cdots,\beta}  S_{a^-_{i_1}\cdots a^-_{i_{n_\alpha}} b^-_{j_1}\cdots b^-_{j_{n_\beta}}}\\
    & + \alpha^{{n_\alpha}-1} \beta^{{n_\beta}-1} S_{A_1 \cdots A_{n_\alpha}}.
\end{split}
\end{equation}
We have $l=\alpha^{n_\alpha}\beta^{n_\beta}$ terms and $r=l+\alpha^{{n_\alpha}-1}\beta^{{n_\beta}-1}$ terms on the LHS and RHS.

\begin{corollary}
    $(\alpha,\cdots,\alpha;\beta,\cdots, \beta)$-inequalities with the non-geometric terms in (\ref{eq:inequality-gen-toric}) are redundant HEIs implied by $(\alpha;\beta)$-inequalities.
    
\end{corollary}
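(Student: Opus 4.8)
The plan is to show that the $(\alpha,\cdots,\alpha;\beta,\cdots,\beta)$-inequality in (\ref{eq:inequality-gen-toric}) is nothing more than a sum of ordinary $(\alpha;\beta)$-inequalities, and hence follows from Corollary \ref{cor:toric-HEIs} with no new content. The key observation is that every term appearing on either side has exactly one index $a^\pm$ per family $A_s$ and one index $b^-$ per family $B_t$, and the only non-geometric term is $S_{A_1\cdots A_{n_\alpha}}$. First I would fix, for each pair of tuples $(i_2,\cdots,i_{n_\alpha}; j_2,\cdots,j_{n_\beta})$, a ``slice'' obtained by freezing all indices except $i_1$ and $j_1$. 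On each such slice the inequality collapses to exactly a single copy of the standard toric inequality (\ref{eq:toric-definition}) in the variables indexed by $i_1\in\{1,\cdots,\alpha\}$ and $j_1\in\{1,\cdots,\beta\}$, with the frozen labels carried along as spectators attached to every region.

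The main step is a bookkeeping argument. Summing the toric inequality (\ref{eq:toric-definition}) over all $\alpha^{n_\alpha-1}\beta^{n_\beta-1}$ choices of the frozen indices reproduces the two geometric double sums in (\ref{eq:inequality-gen-toric}) term by term, since each geometric term on the LHS and RHS corresponds to a unique choice of all the indices $(i_1,\cdots,i_{n_\alpha};j_1,\cdots,j_{n_\beta})$. The non-geometric contributions accumulate: each slice contributes a single $S_{A}$-type term, which in the sliced variables is the entropy of the full collection of regions appearing in that slice, namely $S_{A_1\cdots A_{n_\alpha}}$ after the spectator indices are summed over. I would verify that these $\alpha^{n_\alpha-1}\beta^{n_\beta-1}$ copies sum precisely to the coefficient $\alpha^{n_\alpha-1}\beta^{n_\beta-1}$ multiplying $S_{A_1\cdots A_{n_\alpha}}$ on the RHS of (\ref{eq:inequality-gen-toric}), matching the stated term count $r=l+\alpha^{n_\alpha-1}\beta^{n_\beta-1}$.

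The subtle point, and the step I expect to require the most care, is confirming that the purifier/homology data is consistent across slices so that each sliced inequality really is an instance of (\ref{eq:toric-definition}) rather than merely formally resembling it. Concretely, I must check that the entropy $S_A$ produced on a given slice genuinely equals $S_{A_1\cdots A_{n_\alpha}}$ once the frozen $b$-indices are accounted for, using that $S_{a_1^{(\alpha)}}$ in (\ref{eq:A}) is the entropy of the full $A$-set and that the regions across different families are disjoint. One should also confirm that the spectator labels do not alter the cyclic $\mathrm{mod}\,\alpha_s$ and $\mathrm{mod}\,\beta_t$ structure of (\ref{eq:mod}) within a slice, so that the dihedral symmetry underpinning (\ref{eq:toric-definition}) is preserved.

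Once these checks are in place, the conclusion is immediate: the inequality (\ref{eq:inequality-gen-toric}) is a nonnegative integer combination of valid HEIs, each of which holds by Corollary \ref{cor:toric-HEIs}, and a sum of HEIs is an HEI. Since the combination uses only $(\alpha;\beta)$-inequalities, the inequality is \emph{redundant} in the sense that it is implied by the toric family and contributes no new facet of the HEC. I would close by noting that this also explains why the uplift in (\ref{eq:inequality-gen-toric}), with its particular non-geometric coefficient, is the unique choice for which the slice-wise decomposition closes exactly, foreshadowing that genuinely new inequalities require the modified non-geometric terms studied via the balance and superbalance conditions in section \ref{sec:balance-superbalance}.
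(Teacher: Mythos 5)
Your overall strategy --- writing (\ref{eq:inequality-gen-toric}) as a sum of $\alpha^{n_\alpha-1}\beta^{n_\beta-1}$ instances of the toric inequality and invoking Corollary \ref{cor:toric-HEIs} --- is the same as the paper's, and your counting of groups and of the non-geometric terms agrees with it. However, the specific decomposition you propose does not work: freezing the spectator indices $(i_2,\cdots,i_{n_\alpha};j_2,\cdots,j_{n_\beta})$ does \emph{not} produce copies of (\ref{eq:toric-definition}). The obstruction is that a frozen plus-type index $i_s$ (with $s\geq 2$) enters the two sides of (\ref{eq:inequality-gen-toric}) through \emph{different} regions: every LHS term in your slice carries the spectator $a^+_{i_s}=a_{i_s}\cdots a_{i_s+\frac{\alpha-1}{2}}$, while every RHS term carries $a^-_{i_s}=a_{i_s}\cdots a_{i_s+\frac{\alpha-3}{2}}$. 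A toric inequality applied to composite regions $\tilde a_i = a_i w$, with a fixed spectator $w$ glued to each $a_i$, necessarily has the \emph{same} $w$ on both sides, so no assignment of composite regions realizes your slice. Concretely, in the $(3,3;3)$ example (where $A=\{a_i\}$ and $B=\{b_i\}$ are the two plus-type families and $C=\{c_j\}$ is the minus-type family), freezing $i_2=1$ gives
\begin{equation*}
    \sum_{i,j} S_{a_ia_{i+1}\,b_1b_2\,c_j}\ \geq\ \sum_{i,j} S_{a_i\,b_1\,c_j}+\cdots,
\end{equation*}
which is not an instance of the $(3;3)$ inequality for any relabeling, since the spectators $b_1b_2$ and $b_1$ differ.

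The decomposition that actually closes is the ``diagonal'' one, which is precisely what Lemma \ref{lem:decomposition-with-cyclegraphs} encodes: since $lcm(\alpha,\cdots,\alpha)=\alpha$, the graph $G^\pm_{\{\alpha\}}$ splits into $\alpha^{n_\alpha-1}$ disjoint cycles along which \emph{all} plus-type indices advance in lockstep, $(i,\,i+k_2,\cdots,i+k_{n_\alpha})$ with fixed offsets $k_s$, and similarly for the $\beta$'s; Construction \ref{thm:graph-general} then yields $\alpha^{n_\alpha-1}\beta^{n_\beta-1}$ disjoint $2$-torii. On each such torus one sets $\tilde a_i := a_{(i,1)}a_{(i+k_2,2)}\cdots a_{(i+k_{n_\alpha},n_\alpha)}$, and only with this lockstep grouping do the plus and minus operations factor correctly, $\tilde a_i^{\pm}=a^{\pm}_{(i,1)}a^{\pm}_{(i+k_2,2)}\cdots a^{\pm}_{(i+k_{n_\alpha},n_\alpha)}$, so that each group of terms is an honest $(\alpha;\beta)$ inequality whose non-geometric term is $S_{\tilde A}=S_{A_1\cdots A_{n_\alpha}}$. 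This is visible in the paper's colored decomposition of the $(3,3;3)$ example, where the three groups have $i_2=i_1$, $i_2=i_1+1$, and $i_2=i_1+2$ rather than $i_2$ fixed. So the missing idea in your proposal is exactly this diagonal synchronization (made possible by all $\alpha_s$ being equal), not the purifier/homology consistency you flag as the delicate step; once the lockstep grouping replaces the frozen-index slicing, the rest of your bookkeeping goes through.
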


\begin{proof}

    Using lemma \ref{lem:decomposition-with-cyclegraphs}, $G^{\pm}_{\{\alpha\}} $ decomposes into $\alpha^{n_\alpha-1}$ cycle graphs with the length $lcm(\alpha,\cdots,\alpha) = \alpha$. Similarly, $G^{\pm}_{\{\beta\}} $ decomposes into $\beta^{n_\beta-1}$ cycle graphs with the length $lcm(\beta,\cdots,\beta) = \beta$. Then, following construction \ref{thm:graph-general}, we get $\alpha^{n_\alpha-1}\beta^{n_\beta-1}$ numbers of $2$-torii, i.e.,
    \begin{equation}
        T_\mR = T_{\mR_{(1,1)}} \cup \cdots \cup T_{\mR_{(\alpha^{n_\alpha-1},\beta^{n_\beta-1})}}.
    \end{equation}
    Moreover, there are $\alpha^{n_\alpha-1}\beta^{n_\beta-1}$ non-geometric terms. Thus (\ref{eq:inequality-gen-toric}) can be decomposed into the sum of $\alpha^{n_\alpha-1}\beta^{n_\beta-1}$ numbers of $(\alpha;\beta)$-inequalities.

    Because $(\alpha;\beta)$-inequalities are valid HEIs from corollary \ref{cor:toric-HEIs}\cite{Czech:2023xed}, (\ref{eq:inequality-gen-toric}) are redundant holographic entropy inequalities.

\end{proof}

We give an example of (\ref{eq:inequality-gen-toric}), the $(\alpha_1=3,\alpha_2=3;\beta=3)$-conjecture whose non-geometric terms are fixed by the superbalance conditions.
\begin{equation}
    \sum_{i_1,i_2=1}^{3,3} \sum_{j=1}^{3} S_{a_{i_1}^+b_{i_2}^+ c^-_{j}} \geq \sum_{i_1,i_2=1}^{3,3} \sum_{j_1=1}^{3} S_{a_{i_1}^-b_{i_2}^-c^-_{j}} + 3 S_{a_{1}a_{2}a_{3}b_{1}b_{2}b_{3}},
\end{equation}
which has the following decomposition\footnote{(Color online) Each color corresponds to $(3;3)$-inequality.},
\begin{equation}
\begin{matrix}
    \mathcolor{red}{\sum_{i=1}^{3}\sum_{j=1}^{3} S_{a_{i}^+b_{i}^+c^-_{j} }}\\
    +\mathcolor{blue}{\sum_{i=1}^{3}\sum_{j=1}^{3} S_{a_{i}^+b_{i+1}^+c^-_{j} }}\\
    +\mathcolor{teal}{\sum_{i=1}^{3}\sum_{j=1}^{3} S_{a_{i}^+b_{i+2}^+c^-_{j} }}\\
\end{matrix}
\geq
\begin{matrix}
    \mathcolor{red}{\sum_{i=1}^{3}\sum_{j=1}^{3}  S_{a_{i}^-b_{i}^-c^-_{j}}+S_{a_{1}a_{2}a_{3}b_{1}b_{2}b_{3}}}\\
    +\mathcolor{blue}{\sum_{i=1}^{3} \sum_{j=1}^{3}S_{a_{i}^-b_{i+1}^-c^-_{j}} +S_{a_{1}a_{2}a_{3}b_{1}b_{2}b_{3}} }\\
    +\mathcolor{teal}{\sum_{i=1}^{3}\sum_{j=1}^{3} S_{a_{i}^-b_{i+2}^-c^-_{j}} +S_{a_{1}a_{2}a_{3}b_{1}b_{2}b_{3}}} \\
\end{matrix}.
\end{equation}

\subsection{Other inequalities}






For the generalized toric conjectures,
\begin{equation}
\begin{split}
    \sum_{i_1,\cdots, i_{n_\alpha}=1}^{\alpha_1,\cdots,\alpha_{n_\alpha}} \sum_{j_1,\cdots, j_{n_\beta}=1}^{\beta_1,\cdots,\beta_{n_\beta}} S_{a^+_{i_1}\cdots a^+_{i_{n_\alpha}} b^-_{j_1}\cdots b^-_{j_{n_\beta}}}  \geq \sum_{i_1,\cdots, i_a=1}^{\alpha_1,\cdots,\alpha_{n_\alpha}} &\sum_{j_1,\cdots, j_b=1}^{\beta_1,\cdots,\beta_{n_\beta}}  S_{a^-_{i_1}\cdots a^-_{i_{n_\alpha}} b^-_{j_1}\cdots b^-_{j_{n_\beta}}}\\
    & + \textit{``non-geometric terms''}
\end{split}
\end{equation}
the geometric bitstrings $\tilde{F}(X) = \tilde{f}(x_{(1,1)})\times \cdots \times \tilde{f}(x_{(m,n)})$ are determined geometrically. The non-geometric bitstrings $\tilde{F}_{ng}=\{0,1\}^{r-l}$ corresponding to the ``non-geometric terms" could be determined by the second subrule in Rule \ref{rul:general-toric}. 

In the case of the toric inequalities, or $(\alpha;\beta)$-inequalities, the second subrule in Rule \ref{rul:toric}, i.e., $\|x\|_1=\|f(x)\|_1\text{ mod }2$, determines $\tilde{f}_{ng}$. 
On the contrary, some instances of the generalized toric conjectures can have multiple non-geometric terms whose corresponding bitstrings are not necessarily determined solely by Rule \ref{rul:general-toric}. This is the fundamental obstruction resulting in the necessity of hybrid contraction maps.

For example, consider $(3,1;3)$-conjecture for the sets of subregions $A=\{a_1,a_2,a_3\}$, $B=\{b_1\}$, and $C=\{c_1,c_2,c_3\}$. $T_\mR$ of the conjecture has a single torus since $3/lcm(3,1)=1$. 
Among several variations of $(3,1;3)$-conjectures with different non-geometric terms, we consider the balanced case (\ref{eq:balancepurifierinA}) with $\chi^C=1$ with $b_1$ being a purifier, i.e.,
\begin{equation}\label{eq:313}
\begin{split}
    \sum_{i_1=1}^{3}\sum_{i_2=1}^{1}\sum_{j=1}^{3} S_{a^+_{i_1}b^+_{i_2}c^-_j}   \geq \sum_{i_1=1}^{3}\sum_{i_2=1}^{1}\sum_{j=1}^{3} S_{a^-_{i_1}b^-_{i_2}c^-_j} -``2S_{a_1a_2a_3}" +`` S_{c_1c_2c_3}"+ ``2S_{b_1}"
\end{split}
\end{equation}
where we put double quotations to the non-geometric terms. This is a balanced holographic entropy inequality. Here, we note that (\ref{eq:313}) is the sum of superbalance $(3,1;3)$-conjecture and the subadditivity\footnote{We thank Bart\l{}omiej Czech for the comment on the appearance of subadditivity in (\ref{eq:313sum}).} between $A=\{a_1a_2a_3\}$ and $C=\{c_1c_2c_3\}$, i.e.,
\begin{equation} \label{eq:313sum}
\begin{split}
    \mathcolor{red}{\sum_{i_1=1}^{3}\sum_{i_2=1}^{1}\sum_{j=1}^{3} S_{a^+_{i_1}b^+_{i_2}c^-_j}   \geq \sum_{i_1=1}^{3}\sum_{i_2=1}^{1}\sum_{j=1}^{3} S_{a^-_{i_1}b^-_{i_2}c^-_j} &-``S_{a_1a_2a_3}" +`` 2S_{c_1c_2c_3}" + ``S_{a_1a_2a_3c_1c_2c_3}"}\\
    &-``S_{a_1a_2a_3}"  - ``S_{c_1c_2c_3}" + ``S_{a_1a_2a_3c_1c_2c_3}" 
\end{split}
\end{equation}
where we used $S_{b_1}= S_{a_1a_2a_3c_1c_2c_3}$. The first line is the superbalanced $(3,1;3)$-conjecture (\ref{eq:2p1msuperbalance}) which by itself is an invalid inequality\footnote{(Color online) Superbalance $(3,1;3)$-conjecture is written in red letters.}. The second line is the subadditivity between $A$ and $C$, the addition of which turns the inequality into a valid one.

For the proof, we move the part of non-geometric terms $``2S_{a_1a_2a_3}"$ on the RHS of (\ref{eq:313}) to the LHS because it has a negative coefficient when it is on the RHS. That is,
\begin{equation}
\begin{split}
    \sum_{i_1=1}^{3}\sum_{i_2=1}^{1}\sum_{j=1}^{3} S_{a^+_{i_1}b^+_{i_2}c^-_j} +``2S_{a_1a_2a_3}"    \geq \sum_{i_1=1}^{3}\sum_{i_2=1}^{1}\sum_{j=1}^{3} S_{a^-_{i_1}b^-_{i_2}c^-_j} +`` S_{c_1c_2c_3}"+ ``2S_{a_1a_2a_3c_1c_2c_3}".
\end{split}
\end{equation}

Since we have the part of the non-geometric terms on the LHS, we define
\begin{equation}
    X':=X\times \tilde{X}_{ng}, \; F(X') := \tilde{F}(X')\times \tilde{F}'_{ng}.
\end{equation}
$\tilde{X}_{ng}$ adds an extra bit to the bitstrings of the LHS. The geometric bitstrings are still determined geometrically from Rule \ref{rul:general-toric}.

The non-geometric bitstrings cannot be determined solely by the second subrule,
\begin{equation}
    \|X'\|_1 = \|F(X')\|_1\text{ mod }2,
\end{equation}
because there are choices of bits to assign. For example, 
$\tilde{F}'_{ng}=\{0,1\}^3$ has choices from $\{0,0,1\},\{0,1,0\},\{1,0,0\}$, or $\{1,1,1\}$ when 
\begin{equation}
    \{\|X'\|_1-\|\tilde{F}(X')\|_1\} \text{ mod }2=1.
\end{equation} Similarly, $\tilde{F}'_{ng}=\{0,1\}^3$ has choices from $\{0,0,0\},\{0,1,1\},\{1,0,1\}$, or $\{1,1,0\}$ when
\begin{equation}
    \{\|X'\|_1-\|\tilde{F}(X')\|_1\} \text{ mod }2=0.
\end{equation} Despite the large redundancy, not all choices are independent. We found a contraction map for this inequality using \cite{Bao:2024contraction_map}.

\section{Discussions}\label{sec:discussions}

We conclude by pointing out the possible future approaches and directions.\\

\noindent
\textbf{Interactions among torii.}

We studied the case when $T_\mR$ decomposes\footnote{For a given HEI, in general, the decomposition of a contraction map implies that the inequality is the sum of HEIs. The completeness of a contraction map\cite{Bao:2024contraction_map} states that there exists a contraction map if and only if a HEI exists. Suppose a contraction map of a given HEI decomposes into two contraction maps. Then, the inequality should contain two HEIs by the completeness. These two HEIs can be extracted from the original HEI based on the boundary conditions.} into disjoint torii $T_{\mR_\tau}$ for $\tau=(1,1), \cdots ,(m,n)$. Moreover, we restricted ourselves to the cases when the geometric contraction maps $F$ decompose. 
We say that there are \textit{interactions} among $2$-torii when the geometric contraction maps $\tilde{F}(X)$ of $F(X) = \tilde{F}(X)\times \tilde{F}_{ng}$ do not decompose, i.e.,
\begin{equation}
    \tilde{F}(X = x_{(1,1)}\times \cdots \times x_{(m,n)}) \neq \tilde{f}(x_{(1,1)})\times \cdots \times \tilde{f}(x_{(m,n)})
\end{equation}
This happens when a single bit-flip on a face of $T_{\mR_\tau}$ 
induces multiple bit-flips on vertices of $T_{\mR_\tau}$ and at least another torus $T_{\mR_{\tau'}}$.


When a single bit-flip results in double bit-flips, a candidate geometric map cannot be a contraction map because $\|X-X'\|_1=1$ implies $\|F(X)-F(X')\|_1 = 2$. One way to fix the issue is to modify the coefficients of the LHS of the conjectures, for instance, multiplying $2$ on all the terms of the LHS, i.e.,
\begin{equation}
\begin{split}
    2 \sum_{i_1,\cdots, i_{n_\alpha}=1}^{\alpha_1,\cdots,\alpha_{n_\alpha}} \sum_{j_1,\cdots, j_{n_\beta}=1}^{\beta_1,\cdots,\beta_{n_\beta}} S_{a^+_{i_1}\cdots a^+_{i_{n_\alpha}} b^-_{j_1}\cdots b^-_{j_{n_\beta}}}  \geq \sum_{i_1,\cdots, i_{n_\alpha}=1}^{\alpha_1,\cdots,\alpha_{n_\alpha}} &\sum_{j_1,\cdots, j_{n_\beta}=1}^{\beta_1,\cdots,\beta_{n_\beta}}  S_{a^-_{i_1}\cdots a^-_{i_{n_\alpha}} b^-_{j_1}\cdots b^-_{j_{n_\beta}}}\\
    & + \textit{``non-geometric terms''}
\end{split}
\end{equation}
Then, $\|X-X'\|_1=2$ implies $\|F(X)-F(X)'\|_1=2$ because a single bit-flip on a face of $T_\mR$ now has a weight $2$. We hope to explore further variations of these conjectures.

\noindent\\
\textbf{A geometric contraction map from connected component of a graph.}

We discuss another proof method to explore the generalized toric conjectures without considering the decompositions. First, we reinterpret the contractible knots as geometrically closed partitions of vertices of $T_\mR$. We define a graph $G_{path}:=(V_\mR,E_{path})$ where $V_\mR$ is the set of vertices of $T_\mR$. $E_{path}$ is the set of edges that connect vertices unless the edges cross the partition. This defines the disjoint subsets of vertices of $G_{path}$. Each disjoint subset and the subgraph induced by it can be referred to as a path-connected subset and connected component, respectively. Then, $\|f(x)-f(x')\|_1$ measures the change in the number of path-connected subsets in $G_{path}$.


\begin{figure}
    \centering
    \includegraphics[width=0.8\linewidth]{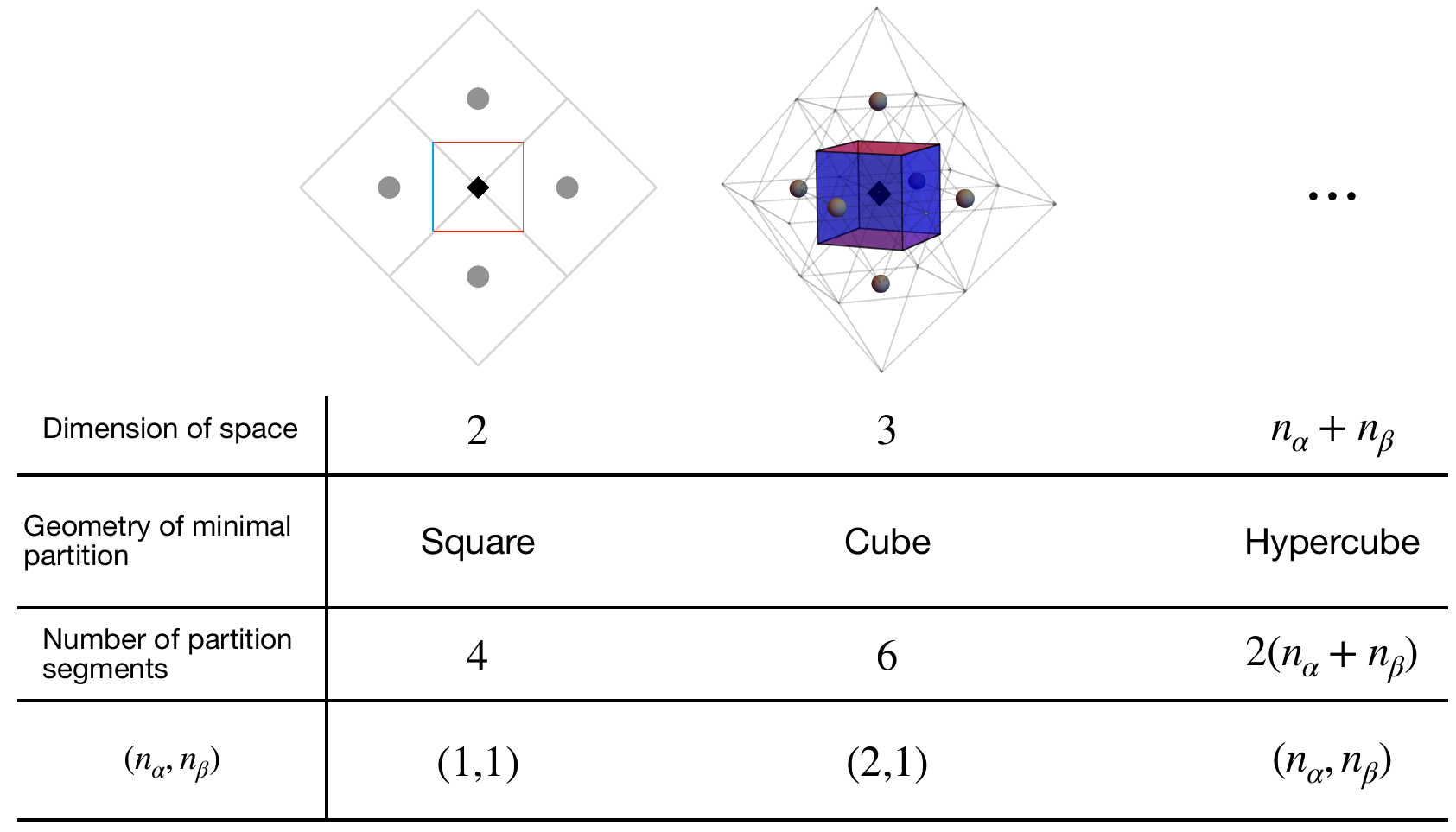}
    \caption{\small{Minimal partitions of graph $T_\mR$ embedded in $(n_\alpha+n_\beta)$-dimensional space for the generalized toric conjectures consists of $2(n_\alpha+n_\beta)$ hypersurfaces. The vertex of $T_\mR$ surrounded by the minimal partition is marked with the black rhombus. The dual vertices of $T_\mL$ are marked with solid spheres. When $(n_\alpha,n_\beta)=(1,1)$ and $n_\alpha+n_\beta=2$, the minimal partition is a square and consists of $2n_\alpha=2$ blue vertical line segments and $2n_\beta=2$ red horizontal line segments. The left and right rhombi are colored with $1$ as in figure \ref{fig:bitstringgeometry}. The top and bottom rhombi are colored with $0$. When $(n_\alpha,n_\beta)=(2,1)$ and $n_\alpha+n_\beta=3$, the minimal partition is a cube and consists of $2n_\alpha=4$ blue vertical squares and $2n_\beta=2$ red horizontal squares. }}
    \label{fig:minimalpartition}
\end{figure}

To construct partitions for the generalized toric conjectures, it is useful to notice that the partition segments are placed between the vertex in $T_\mR$ and its dual vertex in $T_\mL$. Then, the number of segments should match the number of dual vertices around a vertex, see figure \ref{fig:minimalpartition}.

For simplicity, we consider a minimal partition as an example here without loss of generality. A minimal partition encloses only a single vertex of $T_\mR$. For example, a minimal partition comprises four line segments in $(\alpha;\beta)$-inequalities. In the case of the generalized toric conjectures without the decompositions, $T_\mR$ is a $(n_\alpha+n_\beta)$-dimensional toroidal graph. We can see that its minimal partition is a $(n_\alpha+n_\beta)$-dimensional hypercube because the number of codimension-$1$ surfaces of a $(n_\alpha+n_\beta)$-dimensional hypercube matches with the number of dual vertices of $T_\mL$. The number of codimension-$1$ surfaces of a $(n_\alpha+n_\beta)$-dimensional hypercube is $2(n_\alpha+n_\beta)$\cite{coxeter1973regularpolytopes}. The number of dual vertices of vertex $R_v$ of $T_\mR$ is given by, from lemma \ref{lem:cardinalities-general}, 
\begin{equation}
    |\overline{Inc}(R_v)|+|Exc(R_v)|   = 2({n_\alpha}+{n_\beta}) - \sum_{s,t=1}^{{n_\alpha},{n_\beta}} (\delta_{\alpha_s,1} +\delta_{\beta_t,1}).
\end{equation}
The Kronecker deltas in the last term indicate the identification of codimension-$1$ surfaces. Hence, the number of dual vertices matches the number of codimension-$1$ surfaces.

In general, the partitions, not necessarily the minimal partitions, determine $G_{path}$ of $T_\mR$. Hence, $\|F(X)-F(X')\|_1$ can be understood as the change in the number of the path-connected subsets in $G_{path}$. The proof method in \cite{Czech:2023xed} and this paper can potentially be reformulated using the following geometric assignment of bitstrings of the LHS of the generalized toric conjectures.




\noindent\\
\underline{\textit{Geometric assignment of bitstrings $X$}}\\

Consider $(\alpha_1,\cdots,\alpha_{n_\alpha};\beta_1,\cdots,\beta_{n_\beta})$-conjectures. 
\begin{itemize}
    \item For each element of bitstrings $X$ of the LHS, assign codimension-$1$ surfaces of a $(n_\alpha+n_\beta)$-dimensional hypercube. In particular, on each $L_u$, assign
    \begin{equation}
        \begin{cases}
        \text{$2n_\alpha$ hypersurfaces} & \text{if }X_u =1\\
        \text{$2n_\beta$ hypersurfaces} & \text{if }X_u =0\\
        \end{cases}
    \end{equation}
    
    
\end{itemize}
We hope to explore the proof methods of the generalized toric conjectures that are fully geometric instead of the hybrid method discussed in section \ref{sec:examples}.

\noindent\\
\textbf{Generalization of $\mathbb{RP}^2$ inequalities.}


In addition to toric inequalities, \cite{Czech:2023xed} also introduced the so-called $\mathbb{RP}^2$-inequalities, which have a graphical representation on a projective plane. It is our future interest to explore generalizations of $\mathbb{RP}^2$ inequalities and the properties of their graphs, such as the interplay between generalized real projective plane conjectures and generalized toric inequalities and conjectures.

\noindent\\
\textbf{Geometries of holographic entropy inequalities.}

Interestingly, \cite{Czech:2023xed} found a way to geometrize HEIs, such as toric inequalities as a torus, $\mathbb{RP}^2$-inequalities as a projective $2$-plane and $i_6$\cite{Sergio2019:5regions} as a hyperbolic disk. In our program to fully understand the holographic entropy cones for higher parties, another future direction is to construct and classify the geometries of other known thousands of HEIs \cite{Bao:2015bfa,Sergio2019:5regions,Czech:22seven-party,HEarrangement2019Hubeyetal,Hernandez-Cuenca:2023iqh}. However, we currently do not fully understand the criteria for HEIs to have geometric counterparts.












\section*{Acknowledgement}
We thank Bart\l{}omiej Czech, Sirui Shuai, Yixu Wang, Dachen Zhang, Dimitrios Patramanis, Nima Lashkari, Shoy Ouseph, Mudassir Moosa, Kwing Lam Leung, Fabian Ruelle and Paul Oehlmann for the discussions. We extend gratitude to Bart\l{}omiej Czech for comments on the manuscript. K.F. is grateful for the invitation to visit Bart\l{}omiej Czech's group and the hospitalities at Tsinghua University. N.B. is funded by the Quantum Telescope Project. K.F. is supported by N.B.'s startup funding at Northeastern University. J.N. is partially supported by the NSF under Cooperative Agreement PHY2019786 and N.B.'s startup funding at Northeastern University.






\bibliographystyle{JHEP}
\bibliography{main}

\end{document}